\documentclass[journal,10pt,twoside]{IEEEtran}
 
 \usepackage{tikz} \usetikzlibrary{positioning,shapes.geometric,arrows.meta,fit,backgrounds,calc}
 \definecolor{cIn}{RGB}{237,239,242}
 \definecolor{cRisk}{RGB}{249,241,219}
 \definecolor{cRiskB}{RGB}{171,131,28}
 \definecolor{cSub}{RGB}{214,227,243}
 \definecolor{cS}{RGB}{220,237,223}
 \definecolor{cP}{RGB}{243,216,192}
 \definecolor{cMerge}{RGB}{228,222,240}
 \definecolor{cOut}{RGB}{208,210,215}
 \definecolor{cEng}{RGB}{250,250,251}
 \definecolor{cTitle}{RGB}{55,57,64}
 \definecolor{cRef}{RGB}{255,255,255}
 
\usepackage{hyperref}
\hypersetup{colorlinks=true,	linkcolor=black,	citecolor=blue,	}
\usepackage[acronym,shortcuts,automake,nogroupskip,nopostdot]{glossaries}
\makeglossaries
\usepackage{amsmath,amssymb,amsthm}
\usepackage{bm}
\usepackage{graphicx}
\usepackage[compatibility=false]{caption}
\usepackage{subcaption}
\usepackage{svg}
\usepackage{cite}
\usepackage{algorithm}
\usepackage{algpseudocode}
\usepackage{pifont}
\newcommand{\xmark}{\ding{55}}

\algnewcommand\algorithmicinput{\textbf{Input:}}
\algnewcommand\algorithmicoutput{\textbf{Output:}}
\algnewcommand\Input{\item[\algorithmicinput]}
\algnewcommand\Output{\item[\algorithmicoutput]}
\algrenewcommand\algorithmiccomment[1]{\hfill$\triangleright$\,\textit{\small #1}}
\usepackage{booktabs}
\usepackage{multirow}
\usepackage{array}
\usepackage{xcolor}
\usepackage{enumitem}

\newtheorem{proposition}{Proposition}

\newtheorem{definition}{Definition}

\newcommand{\calS}{\mathcal{S}}
\newcommand{\calN}{\mathcal{N}}
\newcommand{\calF}{\mathcal{F}}
\newcommand{\calI}{\mathcal{I}}
\newcommand{\calU}{\mathcal{U}}
\newcommand{\calT}{\mathcal{T}}
\newcommand{\bfbeta}{\bm{\beta}}
\newcommand{\tbeta}{\tilde{\bm{\beta}}}
\newcommand{\hbeta}{\hat{\bm{\beta}}}
\newcommand{\bfgamma}{\bm{\gamma}}
\newcommand{\Risk}{\mathcal{R}}

\begin{document}
\setlength{\parskip}{0pt}
\setlength{\parindent}{15pt}
\newacronym{gsm}{GSM}{global system for mobile communication}
\newacronym{gsma}{GSMA}{global system for mobile communication association}
\newacronym{sdo}{SDO}{standardization development organization}
\newacronym{ietf}{IETF}{Internet engineering task force}
\newacronym{itu}{ITU}{international telecommunication union}
\newacronym{1g}{1G}{first generation}
\newacronym{2g}{2G}{second generation}
\newacronym{3g}{3G}{third generation}
\newacronym{4g}{4G}{fourth generation}
\newacronym{5g}{5G}{fifth generation}
\newacronym{5gs}{5GS}{5G system}
\newacronym{5gc}{5GC}{5G core}
\newacronym{sdn}{SDN}{software-defined network}
\newacronym{nfv}{NFV}{network function virtualization}
\newacronym{xnap}{XnAP}{Xn application protocol}
\newacronym{3gpp}{3GPP}{3rd generation partnership project}
\newacronym{nr}{NR}{new radio}
\newacronym{ue}{UE}{user equipment}
\newacronym{nea}{NEA}{new radio encryption algorithm}
\newacronym{nia}{NIA}{new radio integrity algorithm}
\newacronym{mec}{MEC}{mobile edge computing}
\newacronym{sim}{SIM}{subscriber identification module}
\newacronym{supi}{SUPI}{subscription permanent identifier}
\newacronym{nas}{NAS}{non-access stratum}
\newacronym{as}{AS}{access stratum}
\newacronym{an}{AN}{access network}
\newacronym{ngmn}{NGMN}{next generation mobile networks alliance}
\newacronym{etsi}{ETSI}{european telecommunications standards institute}
\newacronym{scp}{SCP}{service communication proxy}
\newacronym{mar}{MAR}{multi-access rules}
\newacronym{far}{FAR}{forwarding action rule}
\newacronym{nhcc}{NhCC}{next hop chaining counter}
\newacronym{ie}{IE}{information element}
\newacronym{ipups}{IPUPS}{inter-PLMN \gls{up} security}
\newacronym{cups}{CUPS}{control and user plane separation}
\newacronym{smc}{SMC}{security mode command}
\newacronym{aka}{AKA}{authentication and key agreement}
\newacronym{peaa}{PEAA}{privacy-enhanced access authentication}
\newacronym{eap-aka'}{EAP-AKA'}{improved \gls{eap}-\gls{aka}}
\newacronym{suci}{SUCI}{subscription concealed identifier}
\newacronym{ausf}{AUSF}{authentication server function}
\newacronym{arpf}{ARPF}{authentication credential repository and processing function}
\newacronym{sidf}{SIDF}{subscription identifier de-concealing function}
\newacronym{gtp-u}{GTP-U}{GPRS tunnelling protocol - user}
\newacronym{gtp-c}{GTP-C}{GPRS tunnelling protocol - control}
\newacronym{seaf}{SEAF}{security anchor function}
\newacronym{kseaf}{KSEAF}{key security anchor function}
\newacronym{guti}{GUTI}{globally unique temporary identity}
\newacronym{pdcp}{PDCP}{packet data convergence protocol}
\newacronym{rrc}{RRC}{radio resource control}
\newacronym{pdu}{PDU}{protocol data unit}
\newacronym{pfcp}{PFCP}{packet forwarding control protocol}
\newacronym{sba}{SBA}{service based architecture}
\newacronym{sbi}{SBI}{service based interface}
\newacronym{plmn}{PLMN}{public land mobile network}
\newacronym{hplmn}{H-PLMN}{home \gls{plmn}}
\newacronym{vplmn}{V-PLMN}{visited \gls{plmn}}
\newacronym{splmn}{S-PLMN}{serving \gls{plmn}}
\newacronym{sepp}{SEPP}{security edge protection proxy}
\newacronym{ipx}{IPX}{IP exchange service}
\newacronym{oauth}{OAuth}{open authorization}
\newacronym{nf}{NF}{network function}
\newacronym{mno}{MNO}{mobile network operator}
\newacronym{upf}{UPF}{user plane function}
\newacronym{up}{UP}{user plane}
\newacronym{5g-ia7}{5G-IA7}{5G-integrity algorithm 7}
\newacronym{udp}{UDP}{user datagram protocol}
\newacronym{tcp}{TCP}{transport control protocol}
\newacronym{nrf}{NRF}{network repository function}
\newacronym{ipsec}{IPsec}{Internet protocol security}
\newacronym{udm}{UDM}{unified data management}
\newacronym{l1}{L1}{layer 1}
\newacronym{l2}{L2}{layer 2}
\newacronym{l3}{L3}{layer 3}
\newacronym{l4}{L4}{layer 4}
\newacronym{ran}{RAN}{radio access network}
\newacronym{cn}{CN}{core network}
\newacronym{oran}{O-RAN}{open radio access network}
\newacronym{ranap}{RANAP}{radio access network application part}
\newacronym{ecpri}{eCPRI}{evolved common public radio interface}
\newacronym{vxlan}{VxLAN}{virtual extensible local area network}
\newacronym{ssh}{SSH}{secure shell protocol}
\newacronym{pq}{PQ}{post-quantum}
\newacronym{iot}{IoT}{Internet of things}
\newacronym{mitm}{MiTM}{man-in-the-middle}
\newacronym{pls}{PLS}{physical layer security}
\newacronym{phy}{PHY}{physical layer}
\newacronym{d2d}{D2D}{direct-to-device}
\newacronym{v2x}{v2x}{vehicle-to-everything}
\newacronym{m2m}{M2M}{machine to machine}
\newacronym{macsec}{MACsec}{media access control security}
\newacronym{siem}{SIEM}{security information and event management}
\newacronym{api}{API} {application programming interface}
\newacronym{capif}{CAPIF} { common \gls{api} framework}
\newacronym{dpi}{DPI} {deep packet inspection}
\newacronym{prins}{PRINS}{protocol for N32 interconnect security}
\newacronym{stride}{STRIDE}{spoofing, tampering, repudiation, information disclosure, denial of service, and elevation of privilege}
\newacronym{dos}{DoS}{denial of service}
\newacronym{cu}{CU}{central unit}
\newacronym{du}{DU}{distributed unit}
\newacronym{m-plane}{M-Plane}{management plane}
\newacronym{ecies}{ECIES}{elliptic curve integrated encryption scheme}
\newacronym{es256}{ES256}{\gls{sha}256 with elliptic curve digital signature algorithm (ECDSA)}
\newacronym{amf}{AMF}{access and mobility function}
\newacronym{smf}{SMF}{session management function}
\newacronym{pcf}{PCF}{policy control function}
\newacronym{nef}{NEF}{network exposure function}
\newacronym{af}{AF}{application function}
\newacronym{nssf}{NSSF}{network slicing selection function}
\newacronym{dn}{DN}{data network}
\newacronym{dnn}{DNN}{data network name}
\newacronym{aaa}{AAA}{authentication, authorization, and accounting}
\newacronym{tls}{TLS}{transport layer security}
\newacronym{dtls}{DTLS}{datagram transport layer security}
\newacronym{gnb}{gNB}{gNodeB}
\newacronym{bbu}{BBU}{baseband unit}
\newacronym{umts}{UMTS}{universal mobile telecommunications system}
\newacronym{utran}{UTRAN}{\gls{umts} terrestrial radio access network}
\newacronym{teid}{TEID}{tunnel endpoint identifier}
\newacronym{imsi}{IMSI}{international mobile subscriber identity}
\newacronym{mme}{MME}{mobility management entity}
\newacronym{ng}{NG}{next generation}
\newacronym{enb}{eNB}{enhanced NodeB}
\newacronym{sn}{SN}{serving network}
\newacronym{ike}{IKE}{Internet key exchange}
\newacronym{ikev2}{IKEv2}{Internet key exchange version 2}
\newacronym{esp}{ESP}{encapsulating security payload}
\newacronym{wesp}{WESP}{wrapped encapsulating security payload}
\newacronym{ng-ran}{NG-RAN}{\gls{ng}-\gls{ran}}
\newacronym{qos}{QoS}{quality of service}
\newacronym{qoe}{QoE}{quality of experience}
\newacronym{nds}{NDS}{network domain security}
\newacronym{ip}{IP}{Internet protocol}
\newacronym{json}{JSON}{JavaScript object notation}
\newacronym{jose}{JOSE}{JavaScript object signing and encryption}
\newacronym{seid}{SEID}{session endpoint identifier}
\newacronym{pei}{PEI}{permanent equipment identifier}
\newacronym{hres*}{HRES*}{hash response}
\newacronym{hxres*}{HXRES*}{hash expected response}
\newacronym{henb}{HeNB}{home \gls{enb}}
\newacronym{uicc}{UICC}{universal integrated circuit card}
\newacronym{rn}{RN}{radio network}
\newacronym{nssaa}{NSSAA}{network slice-specific authentication and authorization}
\newacronym{http/2}{HTTP/2}{hypertext transfer protocol version 2}
\newacronym{aes}{AES}{advanced encryption standard}
\newacronym{mac}{MAC}{message authentication code}
\newacronym{mediumac}{MAC}{medium access control}
\newacronym{pfs}{PFS}{perfect forward secrecy}
\newacronym{sha}{SHA}{secure hash algorithm}
\newacronym{hmac}{HMAC}{hash-based message authentication code}
\newacronym{ts}{TS}{technical specification}
\newacronym{ip-tfs}{IP-TFS}{\gls{ip} - \gls{tfs}}
\newacronym{sa}{SA}{simulated annealing}
\newacronym{gsa}{GSA}{group security association}
\newacronym{g-ikev2}{G-IKEv2}{group key management using \gls{ikev2}}
\newacronym{tfc}{TFC}{traffic flow confidentiality}
\newacronym{tfs}{TFS}{traffic flow security}
\newacronym{pki}{PKI}{public key infrastructure}
\newacronym{ota}{OTA}{over the air}
\newacronym{os}{OS}{operating system}
\newacronym{res}{RES}{response verification}
\newacronym{res*}{RES*}{response from RES}
\newacronym{urllc}{URLLC}{ultra reliable and low latency communications}
\newacronym{embb}{eMBB}{enhanced mobile broadband }
\newacronym{gtp}{GTP}{general packet radio services tunneling protocol}
\newacronym{ap}{AP}{application protocol}
\newacronym{cp}{CP}{control plane}
\newacronym{rlc}{RLC}{radio link control}
\newacronym{eap}{EAP}{extensible authentication protocol}
\newacronym{rb}{RB}{radio bearer}
\newacronym{apn}{APN}{access point name}
\newacronym{uu}{UU}{unique user}
\newacronym{cia}{CIA}{confidentiality, integrity, and availability}
\newacronym{nist}{NIST}{National Institute of Standards and Technology}
\newacronym{http}{HTTP}{hypertext transfer protocol}
\newacronym{sm}{SM}{session management}
\newacronym{jws}{JWS}{JSON web signature}
\newacronym{urr}{URR}{usage reporting rule}
\newacronym{bar}{BAR}{buffering action rule}
\newacronym{pdr}{PDR}{packet detection rule}
\newacronym{iab}{IAB}{integrated access and backhaul}
\newacronym{enisa}{ENISA}{european union agency for cybersecurity}
\newacronym{nssid}{NSSID}{network slice subnet instance identifier}
\newacronym{sctp}{SCTP}{stream control transmission protocol}
\newacronym{ns}{NS}{network slicing}
\newacronym{vnf}{VNF}{virtual network function}
\newacronym{imei}{IMEI}{international mobile equipment identity}
\newacronym{ml}{ML}{machine learning}
\newacronym{leo}{LEO}{Low Earth orbit}
\newacronym{cnn}{CNN}{convolutional neural network}
\newacronym{geo}{GEO}{geostationary orbit}
\newacronym{ntn}{NTN}{non-terrestrial network}
\newacronym{haps}{HAPS}{high altitude platform station}
\newacronym{ai}{AI}{artificial intelligence}
\newacronym{pep}{PEP}{performance enhancing proxies}
\newacronym{vpn}{VPN}{virtual private network}
\newacronym{ris}{RIS}{reconfigurable intelligent surfaces}
\newacronym{fl}{FL}{federated learning}
\newacronym{iov}{IoV}{Internet of vehicles}
\newacronym{sagin}{SAGIN}{space-air-ground integrated network}
\newacronym{istn}{6G-ISTN}{6G integrated satellite-terrestrial networks}
\newacronym{tn}{TN}{terrestrial networks}
\newacronym{hsn}{HSN}{hybrid satellite networks}
\newacronym{ibc}{IBC}{identity-based cryptography}
\newacronym{ncc}{NCC}{network control center}
\newacronym{mn}{MN}{mobile node}
\newacronym{vhn}{VHetNet}{vertical heterogeneous network}
\newacronym{pla}{PLA}{physical layer authentication}
\newacronym{ds}{DS}{Doppler frequency shift}
\newacronym{pkg}{PKG}{private key generator}
\newacronym{sgin}{SGIN}{space–ground integrated network}
\newacronym{pbs}{PBS}{perfect backward secrecy}
\newacronym{pfbs}{PFBS}{perfect forward/backward secrecy}
\newacronym{ecc}{ECC}{elliptic curve cryptography}
\newacronym{pkc}{PKC}{public key cryptography}
\newacronym{skc}{SKC}{secret key cryptography}
\newacronym{crt}{CRT}{chinese remainder theorem}
\newacronym{sdh}{SDH}{strong diffie-hellman}
\newacronym{ecdsa}{ECDSA}{elliptic curve digital signature algorithm}
\newacronym{ccsds}{CCSDS}{consultative committee for space data systems}
\newacronym{6g}{6G}{sixth-generation}
\newacronym{gnss}{GNSS}{global navigation satellite system}
\newacronym{sin}{SIN}{spatial information networks}
\newacronym{csi}{CSI}{channel state information}
\newacronym{los}{LOS}{line-of-sight}
\newacronym{iq}{I-Q}{in-phase and quadrature}
\newacronym{srp}{SRP}{signal received power}
\newacronym{snr}{SNR}{signal-to-noise-ratio}
\newacronym{isl}{ISL}{inter-satellite link}
\newacronym{satcom}{SamCom}{satellite communication}
\newacronym{sdls}{SDLS}{space data link security}
\newacronym{ah}{AH}{authentication header}
\newacronym{ids}{IDS}{intrusion detection system}
\newacronym{ips}{IPS}{intrusion prevention system}
\newacronym{vnr}{VNR}{virtual network request}
\newacronym{vne}{VNE}{virtual network embedding}
\newacronym{sfc}{SFC}{security function chain}
\newacronym{ssf}{SSF}{security service function}
\newacronym{sit}{SLSN}{multi-slice LEO satellite network}
\newacronym{mips}{MIPS}{million instructions per second}
\newacronym{sla}{SLA}{service level agreement}
\newacronym{sf}{SF}{security function}
\newacronym{milp}{MILP}{mixed-integer linear programming}
\newacronym{minlp}{MINLP}{mixed-integer non-linear programming}
\newacronym{fair}{G-Fair}{greedy association with fair bandwidth distribution}
\newacronym{dad}{B-DAD}{balanced association with delay-aware bandwidth distribution}
\newacronym{gs}{GS}{ground station}
\newacronym{ua}{UA}{user association}
\newacronym{dsr}{DSR}{demand satisfaction ratio}
\newacronym{cdf}{CDF}{cumulative distribution function}
\newacronym{sr}{Max-SR}{maximum sum of data rate}
\newacronym{vr}{VR}{virtual reality}
\newacronym{v2v}{V2V}{vehicle-to-vehicle}
\newacronym{raan}{RAAN}{right ascension of the ascending node}
\newacronym{kpi}{KPI}{key performance indicator}
\newacronym{sfa}{SFA}{security functions allocation}
\newacronym{e2e}{E2E}{end-to-end}
\newacronym{rl}{RL}{reinforcement learning}
\newacronym{fsol}{FSO}{free space optical}
\newacronym{admm}{ADMM}{alternating direction method of multipliers}
\newacronym{ru}{RU-MILP}{Risk-Unaware MILP}
\newacronym{ilp}{ILP}{integer linear programme}
\newacronym{fw}{FW}{firewall}
\newacronym{tm}{TM}{traffic monitor}

\title{Scalable Security and Migration-Aware SFC Provisioning in LEO Satellite Networks}

\author{
  Mohammed Mahyoub$^{1}$,
  Wael Jaafar$^{3}$,
  Sami Muhaidat$^{1,2}$,
  and Halim Yanikomeroglu$^{1}$
  \\
  $^{1}$Department of Systems and Computer Engineering, Carleton University,
        Ottawa, Ontario, Canada
  \\
  $^{2}$Department of Computer Science, Khalifa University, Abu Dhabi, UAE
  \\
  $^{3}$Department of Software and IT Engineering,
        \'{E}cole de Technologie Sup\'{e}rieure, Montreal, Quebec, Canada
}
\maketitle

\begin{abstract}
\gls{leo} satellite constellations are emerging as a backbone for global 6G connectivity, where independent tenant slices share orbital infrastructure, each requiring an ordered chain of security \glspl{vnf}. Because onboard computation and networking are scarce, slices cannot be given dedicated \glspl{vnf}. They must share instances on the same satellites, enlarging the attack surface and exposing tenants to cross-slice side-channel risk. This exposure shifts continually as visibility, orbital motion, and the inter-satellite topology change in time (epochs), making \gls{vnf} migration a structural necessity that couples resource efficiency, service continuity, and security isolation into a single problem. We formulate this security- and migration-aware \gls{sfc} placement as a multi-slice \gls{milp} whose core is a co-location risk model, grounded in ISO/NIST principles and supported by analytic bounds, in which we separate avoidable migrations from those forced by orbital motion. Because the joint program scales quadratically with the cross-slice co-location terms, we develop an \gls{admm}-inspired penalized per-slice best-response decomposition that recasts the coupling as a linear per-slice penalty, yielding independent subproblems through sequential (S-\gls{admm}) and parallel, collision-repaired (P-\gls{admm}) schedules. Simulations over a Walker-Delta satellite constellation show that the proposed framework eliminates co-location risk, reduces \gls{sfc} migrations, and sustains full delay compliance, while remaining feasible within the per-epoch budget for slice counts where the monolithic security-aware \gls{milp} is intractable. 
\end{abstract}

\begin{IEEEkeywords}
\gls{leo} satellites, \gls{sfc} placement, \gls{admm} decomposition, security-aware optimization, \gls{vnf} migration, 6G.
\end{IEEEkeywords}

\section{Introduction}
\label{sec:introduction}

\IEEEPARstart{L}{ow}  Earth orbit \gls{leo} satellite megaconstellations emerge as critical infrastructure for global 6G connectivity \cite{Abdelsadek2023}.
\gls{leo} megaconstellations such as SpaceX Starlink, Amazon Kuiper, and OneWeb place thousands of satellites at altitudes of 340-1200~km, delivering broadband coverage with low latency \cite{Pachler2024}.
The integration of \gls{leo} backhaul and access capacity into 5G and next-generation 6G radio access networks (RANs) has motivated a growing interest in on-orbit edge computing, offloading latency-sensitive and security-demanding processing from terrestrial data centres to satellite computing nodes \cite{Qiufen2024}.

In multi-slice deployments where independent organizations share the same orbital infrastructure, security processing for each slice is naturally realized as an ordered chain of software-defined virtual network functions  (\glspl{vnf}) \cite{etsi_nfv}, forming a \acrfull{sfc} that may include \gls{fw}, \gls{ids}, \gls{tm}, \gls{siem}, and encryption functions \cite{Mahyoub2026}.
The orchestration problem of deciding which satellite hosts which \gls{vnf} instance for which slice's user is the main \gls{sfc} placement problem studied in this paper.

This placement problem is significantly more challenging in orbital networks than in terrestrial environments due to the highly dynamic nature of \gls{leo} constellations. On one hand, unlike ground networks where topology changes are relatively infrequent, the connectivity graph in \gls{leo} systems evolves continuously as satellites move along their orbital trajectories \cite{mahyoub2026_visibility}. For example, satellites operating at an altitude of approximately $550$~km complete an orbit around the Earth in roughly $90$~minutes and remain visible to a given ground user for only $5$--$10$~minutes. Consequently, at each orchestration epoch, the network topology may differ substantially. Specifically, \gls{isl} links are reconfigured, end-to-end propagation delays and shortest-path routes vary, and the set of satellites capable of serving a particular user changes continuously \cite{Laniewski2025}. This persistent topology evolution tightly couples \gls{sfc} placement decisions with the temporal dynamics of the constellation, making service-chain orchestration more complex than in static or slowly varying terrestrial networks.
Consequently, epoch-by-epoch (i.e., per time window) adaptive re-optimization is mandatory \cite{Pachler2026}. Such re-optimization is not free, as it leads to relocating a running \glspl{vnf}, transferring its run-time state (e.g., \gls{ids}'s or \gls{siem}'s detection state, and encryption session keys) across the ISL mesh, and interrupting the traffic flow and momentarily degrading the security service the SFC chain is meant to provide \cite{Geng2024}.
A rotating topology forces a baseline of relocations that no objective can prevent. Hence, each re-optimization must penalize only avoidable migrations, in which the prior satellite remains visible and capacity-feasible. Yet, the solver relocates the \gls{vnf}s to improve the single-epoch objective. Forced migrations, where the prior satellite has dropped below the elevation threshold or has become capacity-infeasible, are tolerated. A myopic solver that treats all migrations uniformly over-penalizes necessary handoffs and under-penalizes controllable churn \cite{Mahyoub2026}.

On the other hand, when functions of different slices share the same \gls{vnf} instance in a satellite, confidential traffic and cryptographic material may be exposed to cross-slice side-channel attacks such as cache-timing~\cite{liu2015last} and co-resident memory probing~\cite{ristenpart2009hey}. Because the topology changes every epoch, this exposure shifts dynamically and must be mitigated by a security objective embedded within the placement optimization strategy.

Since VNF migration cost and co-location risk trade against each other, and co-location couples otherwise independent slices, the placement is chosen by a single joint optimization that minimizes both avoidable-migration cost and security risk over all slices. The main scalability challenge is due to the fact that the joint all-slice formulation introduces large binary co-location indicator variables, one for each ordered pair of (slice, \gls{sfc}'s \gls{vnf}) tuples sharing the same \gls{vnf} instance in a particular satellite. This quadratic growth in slice counts renders the joint solution computationally intractable at a practical scale, motivating a principled decomposition strategy.

Based on the above, four requirements should be taken into account: (i)
adaptation to dynamic \gls{leo} topology through epoch-by-epoch re-optimization, (ii) distinction between avoidable and forced migrations, (iii) quantifiable cross-slice co-location risk objective, and (iv) tractable decomposition as the slice count grows. Prior work partially addressed these requirements, falling into four threads.
First, terrestrial \gls{milp}/heuristic \gls{sfc} placement \cite{cohen2015near,eramo2017approach,rankothge2017optimizing,biallach2024vnf} established algorithmic baselines, but assumed static substrates without \gls{leo} topology dynamics or multi-tenant security objectives.
Second, \gls{vnf} and \gls{sfc} orchestration for \gls{leo} satellite networks \cite{wang2020sfc,huang2020service,he2024iotj,minardi2025jsac}, spanning \gls{ilp} decompositions \cite{Jia2020,Jia2021}, demand-triggered migration heuristics~\cite{Geng2024}, graph-attention and cooperative learning methods \cite{He2024,Doan2025}, and flexible resource allocation \cite{Ahsan2026}, handled epoch-driven topology change but did not formalize cross-slice co-location security risk as a quantifiable multi-slice objective, even when serving multiple slices, such as \cite{Ahsan2026,Liu2026}, thus overlooking inter-slice security coupling and the avoidable/forced migration distinction.
Third, \gls{sfc} orchestration in \gls{sagin} \cite{Xu2024,Petrosino2023,Qin2023,Zheng2025,Wang2026,Liu2026b} extended this concept to heterogeneous multi-segment architectures, but likewise lacked security-aware multi-slice objectives and the avoidable/forced distinction.
Fourth, security-aware VNF and \gls{sfc} embedding \cite{zhang2021security,ali2020security,JGao2022,Wang2024,Dubba2024} enforced isolation and protection through physical separation, dedicated backup paths, or threat-aware scheduling, but only in terrestrial contexts without \gls{leo}-specific dynamic co-location risk modelling.

To the best of our knowledge, no prior work fully satisfied the four requirements. To address this gap, we present here a novel security- and migration-aware \gls{sfc} placement framework for \gls{leo} satellite networks. The proposed framework formulates a joint multi-slice \gls{milp} with McCormick-linearized co-location risk and an avoidable-migration epigraph, then applies an \acrfull{admm}-inspired penalized per-slice best-response decomposition \cite{boyd2011distributed,Takapoui2020} that transfers the cross-slice co-location coupling to a linear per-slice penalty. Two coordinated schedules are proposed and analyzed, namely a sequential Gauss-Seidel variant (S-\gls{admm}) that refreshes a cross-slice occupancy estimate within each sweep, and a parallel variant (P-\gls{admm}) that solves all slices concurrently and restores coordination through a deterministic collision-repair pass.

The main contributions of the paper are as follows:
\begin{enumerate}
\item \textit{Security- and Migration-Aware Formulation:} We formulate security- and migration-aware \gls{sfc} placement in \gls{leo} constellations as a single-epoch multi-slice \gls{milp}, subject to per-satellite CPU capacity, \gls{isl} capacity, and per-user \gls{e2e} delay constraints.

\item \textit{Scalable Decomposition:} To overcome the quadratic growth of co-location variables that renders the joint \gls{milp} intractable, we develop an \gls{admm}-inspired penalized per-slice best-response decomposition that reduces the problem to separable per-slice subproblems. Since a naive separable penalty leaves the cross-slice coupling inactive, we further develop two coordination schedules that restore it: a sequential Gauss-Seidel occupancy refresh (S-\gls{admm}) and a parallel Jacobi schedule with deterministic collision repair (P-\gls{admm}).

\item \textit{Validation:} We evaluate the framework on a satellite Walker-Delta constellation across a range of slice counts. The proposed methods reduce co-location risk by orders of magnitude relative to security-unaware optimization, remain feasible within the per-epoch budget at slice counts, for which the monolithic security-aware \gls{milp} times out, and expose a controllable security-migration trade-off between the two decomposition schedules.
\end{enumerate}

The remainder of the paper is organized as follows.
Section~\ref{sec:related} provides an overview of the related work.
Section~\ref{sec:system_model} presents the system model and Section~\ref{sec:problem_formulation} formulates the security- and migration-aware placement problem.
Section~\ref{sec:admm} develops the proposed S-\gls{admm} and P-\gls{admm} decompositions.
Section~\ref{sec:setup} describes the simulation setup and  Section~\ref{sec:results} presents the experimental results. 
Finally, Section~\ref{sec:conclusion} concludes the paper.

\section{Related Work}
\label{sec:related}
\begin{table*}[!t]
\renewcommand{\arraystretch}{1.05}
\caption{Comparison of the Proposed Framework with Reviewed Works}
\label{tab:related}
\centering
\scriptsize
\setlength{\tabcolsep}{2.6pt}
\begin{tabular}{@{}l c c c c c c c c@{}}
\toprule
\multirow{2}{*}{\textbf{Work}} &
  \multirow{2}{*}{\textbf{Year}} &
  \multirow{2}{*}{\textbf{Network}} &
  \multirow{2}{*}{\begin{tabular}[c]{@{}c@{}}\textbf{Multi-}\\\textbf{slice}\end{tabular}} &
  \multirow{2}{*}{\begin{tabular}[c]{@{}c@{}}\textbf{Security}\\\textbf{model}\end{tabular}} &
  \multirow{2}{*}{\begin{tabular}[c]{@{}c@{}}\textbf{Migration}\\\textbf{handling}\end{tabular}} &
  \multirow{2}{*}{\begin{tabular}[c]{@{}c@{}}\textbf{Objective}\\\textbf{function}\end{tabular}} &
  \multirow{2}{*}{\begin{tabular}[c]{@{}c@{}}\textbf{Optimisation}\\\textbf{method}\end{tabular}} &
  \multirow{2}{*}{\begin{tabular}[c]{@{}c@{}}\textbf{Scalable}\\\textbf{decomposition}\end{tabular}} \\[8pt]
\midrule
Jia \textit{et al.}~\cite{Jia2020,Jia2021} & 2020  & \gls{leo}  Satellites      & \xmark  & \xmark   & \xmark  & Min. resource consumption          & \gls{ilp}      & Dantzig-Wolfe decomposition \\
Gao \textit{et al.}~\cite{XGao2021} & 2021  & \gls{leo}  Satellites     & \xmark  & \xmark   & \xmark  & Min. resource utilization        & INLP/Greedy   & \xmark \\
Gao \textit{et al.}~\cite{XGao2022} & 2022  & \gls{leo}  Satellites    & \xmark  & \xmark   & \xmark  & Max. network payoff       & Potential~game    & Decentralization \\
Petrosino \textit{et al.}~\cite{Petrosino2023} & 2023  & LEO Satellites        & \checkmark  & \xmark   & \xmark  & Min. delay     & Heuristic    & \xmark \\
Xia \textit{et al.}~\cite{Xia2024} & 2024  & \gls{leo}  Satellites    & \xmark  & \xmark   & \xmark  & Min.~average~delay        & Approx. alg. & \xmark \\
Geng \textit{et al.}~\cite{Geng2024} & 2024  & \gls{leo}  Satellites        & \checkmark  & \xmark   & \checkmark & Max. acceptance + Min. migration & Heuristic    & \xmark \\
He \textit{et al.}~\cite{He2024} & 2024  & \gls{leo}  Satellites        & \xmark  & \xmark   & \checkmark  & Max.~acceptance + load fairness  & INLP/GAT     & \xmark \\
Yan \textit{et al.}~\cite{Yan2025} & 2025  & \gls{leo}   Satellites     & \xmark  & \xmark   & \xmark  & Max. network profit            & \gls{ilp}          & \xmark \\
Doan \textit{et al.}~\cite{Doan2025} & 2025  & \gls{leo}  Satellites        & \checkmark  & \xmark   & \xmark  & Min.~average~delay        & MAQL      & Decentralized \\
Ahsan \textit{et al.}~\cite{Ahsan2026} & 2026  & LEO Satellites     & \checkmark & \xmark   & \xmark  & Max. served requests     & \gls{milp}/SCA     & \xmark \\
Mahyoub \textit{et al.}~\cite{Mahyoub2026} & 2026  & \gls{leo}  Satellites        & \checkmark  & \xmark& \checkmark & Min. handovers + migrations   & MINLP        & Hierarchical decomposition \\
Liu \textit{et al.}~\cite{Liu2026} & 2026  & \gls{leo}  Satellites        & \checkmark & \xmark   & \xmark  & Max. system utility          & DRL   & Clustering \\
Chen \textit{et al.}~\cite{Chen2026} & 2026  & \gls{leo}  Satellites        & \xmark  & \xmark   & \xmark  & Max. revenue + Min. latency         & \acrshort{rl}       & \xmark \\
Liu \textit{et al.}~\cite{Liu2026a} & 2026  & \gls{leo}  Satellites   & \checkmark  & \xmark   & \xmark  & Min. cost + latency + energy   & Heuristic   & \xmark \\
\addlinespace
Jia et al.~\cite{Jia2025} & 2025  & SAGIN      & \xmark  & \xmark   & \xmark  & Max. SFC deployment   & \gls{ilp}/DDQN     & \xmark \\
Liu \textit{et al.}~\cite{Liu2026b} & 2026  & SAGIN       & \checkmark  & \xmark   & \checkmark & Min. mig. + delay + jitter   & ORO/OLRO     & Graph aggregation \\
\addlinespace
Zhang \textit{et al.}~\cite{ZhangVNE} & 2021  & Terrestrial& \xmark  & Sec. Level& \xmark  & Max. long term return         & \acrshort{rl}    & \xmark \\
Gao \textit{et al.}~\cite{JGao2022} & 2022  & Terrestrial& \checkmark & Isolation& \xmark  & Min.~resources + Max. isolation     & INLP/Hypergraph   & \xmark \\
Wang \textit{et al.}~\cite{Wang2024} & 2024  & Terrestrial& \xmark  & Protection& \xmark  & Min. cost + Max. protection     & \gls{ilp}/Heuristic & \xmark \\
Dubba \textit{et al.}~\cite{Dubba2024} & 2024  & Terrestrial& \xmark  & Threat-Aware& \xmark  & Min. execution cost       & Heuristic    & \xmark \\[4pt]
\midrule
\textbf{This work} &
  \textbf{2026} &
  \textbf{LEO Satellites} &
  \textbf{\checkmark} &
  \checkmark  &
  \textbf{\checkmark} &
  \textbf{\begin{tabular}[c]{@{}c@{}}Min.~security risk + migrations\end{tabular}} &
  \textbf{\gls{milp}} &
  \textbf{\gls{admm}} \\
\bottomrule
\end{tabular}
\end{table*}
Integrating  \gls{sdn} and \gls{nfv} into \gls{leo} satellite networks raises a scheduling problem. Indeed, limited on-board resources must be allocated across a topology that restructures at every orbital epoch. Early studies cast the problem as an integer program over a time-evolving graph. For example, Jia \textit{et al.} formulated \gls{vnf} orchestration in software-defined \gls{leo} small-satellite networks as an \gls{ilp} targeting resource minimization and solved it through a Dantzig-Wolfe decomposition with column generation~\cite{Jia2020,Jia2021}. In contrast, Yan \textit{et al.} reused \gls{vnf} instances across concurrent flows to lower initialization delay and raise profit~\cite{Yan2025}. Building toward more heterogeneous service models, Ahsan \textit{et al.} generalized these single-service formulations to a multi-slice eMBB/mMTC \gls{milp} solved by successive convex approximation (SCA) with epoch-aware path reconfiguration \cite{Ahsan2026}.

Heuristic, temporal, and look-ahead studies instead track the structure imposed by orbital motion at a lower cost. For example, Petrosino \textit{et al.} allocated \glspl{sfc} among \gls{leo} CubeSats over a planning horizon under intermittent visibility~\cite{Petrosino2023}. Moreover, Geng \textit{et al.} deferred relocation until arriving requests can no longer be accommodated~\cite{Geng2024}. This demand-triggered migration heuristic is proven to approach the offline optimal cost. Liu \textit{et al.} jointly reduced \gls{sfc} migration frequency, delay, and jitter in SAGIN. Exploiting orbital regularity, Gao \textit{et al.} used trajectory predictions ~\cite{XGao2021} and later a potential-game formulation with a decentralized Nash-equilibrium solver~\cite{XGao2022} for location-aware placement, while Xia \textit{et al.} introduced the chaining-orbit concept for delay-aware service chaining under uncertain \gls{isl} delays~\cite{Xia2024}. Closest to the present work is our previous study~\cite{Mahyoub2026}, where we reduced handovers and security-function migrations through temporal stability regularization, warm-start seeding, and hierarchical decomposition of user association and \gls{sfc} placement.

Recently, learning-based studies have addressed the scale and non-stationarity of large constellations. For instance, He \textit{et al.} cast \gls{sfc} orchestration in dynamic \gls{leo} networks as an integer nonlinear program (INLP) and solved it with a graph attention network (GAT) that jointly maximizes service acceptance and load fairness~\cite{He2024}. Similarly, Doan \textit{et al.} addressed joint \gls{vnf} caching and placement through cooperative multi-agent Q-learning augmented by Bayesian optimization~\cite{Doan2025}. 
Alternatively, Jia \textit{et al.} combined an \gls{ilp} formulation with a double deep Q-network  (DDQN) over a reconfigurable time-expansion graph to resolve inter-task resource conflicts and urgency constraints in SAGIN~\cite{Jia2025}. 
Other works focused on the energy efficiency of the placement strategy. 
For example,  Chen \textit{et al.} proposed a battery-aware proximal-policy-optimization (PPO) framework for \gls{sfc} placement under fluctuating satellite energy levels ~\cite{Chen2026}, while Liu \textit{et al.} developed multi-objective heuristics for satellite optical networks that jointly balance battery lifetime, \gls{e2e} latency, and resource utilization under eclipse-driven power cycling~\cite{Liu2026a}.

The aforementioned studies have progressively broadened the \gls{sfc} and \gls{vnf} placement objective from resource efficiency toward service acceptance, latency, energy, and migration cost, and have evolved from exact solvers through heuristics to learning-based policies. Across all the above works, none formalized cross-slice co-location security risk as a quantifiable, optimizable objective grounded in established security standards.

Security-aware \gls{vnf} placement has grown in terrestrial \gls{sfc} contexts along two threads. The first enforces physical isolation as a combinatorial constraint. For instance, Gao \textit{et al.} proposed multi-tenant isolation as weighted hypergraph matching that forbids co-locating conflicting tenants~\cite{JGao2022}, while Wang \textit{et al.} reoriented this toward survivability with asymmetric dedicated protection across primary/backup chains~\cite{Wang2024}. The second relaxes hard isolation into learned security-level requirements. For example, Zhang \textit{et al.} encoded node-level security demands into the substrate features of an \gls{rl} policy~\cite{ZhangVNE}, and Dubba \textit{et al.} added an explicit adversary model with \gls{sla}-enforcing heuristics trading cost against security~\cite{Dubba2024}. However, these works operated in terrestrial or technology-agnostic settings and treated security as a static, per-node attribute. None of these works model co-location risk as a function of data sensitivity, slice criticality, and bilateral isolation policy grounded in ISO/NIST risk management principles, namely the ISO~31000 likelihood$\,\times\,$consequence risk structure~\cite{iso31000} and the NIST~SP~800-53 boundary- and process-isolation control families~\cite{nist800}. This distinction is decisive in the LEO setting, where co-location exposure is not a one-time design constraint but an epoch-by-epoch objective that shifts as the constellation rotates, and thus should be embedded within the adaptive placement optimization approach.

This work closes the above gaps by proposing the first multi-slice security-aware \gls{milp} for \gls{leo} \gls{sfc} placement, combining a formally grounded risk model with analytic bounds and an \gls{admm} decomposition that keeps the optimization tractable.
Table~\ref{tab:related} positions reviewed works against the contributions of the proposed framework.

\section{System Model}
\label{sec:system_model}

\subsection{Network Model}
\label{subsec:constellation}

We consider a Walker-Delta  \gls{leo} satellite network comprising $S$~satellites distributed across $P$~orbital planes, with $S/P$~satellites per plane, orbital altitude $h_{\text{orb}}$~km, and inclination $\iota$~degrees.
Let $\calS = \{1,\ldots,S\}$ denote the satellite index set.
Time is discretised into orchestration epochs indexed by $t \in \calT = \{0,1,\ldots,T-1\}$, each of duration $\Delta t$~seconds \cite{Wang2022}.
At each epoch~$t$, the network state is captured by a topology snapshot
\(
  \mathcal{G}^{(t)} = \bigl(\calS,\;\mathcal{E}^{(t)},\;\bm{d}^{(t)},\;\bm{c}^{(t)},\;
 \bm{p}^{(t)}\bigr), \)
where $\mathcal{E}^{(t)} \subseteq \calS \times \calS$ is the directed \gls{isl} graph, $\bm{d}^{(t)}$ are the corresponding propagation delays, $\bm{c}^{(t)}$ are per-satellite CPU capacities, and $\bm{p}^{(t)} = \{\bm{r}_s^{(t)}\}_{s \in \calS}$ are the satellite position vectors in $\mathbb{R}^3$. The \gls{isl} graph $\mathcal{E}^{(t)}$, the propagation delays $\bm{d}^{(t)}$, and user-satellite visibility are all derived from $\bm{p}^{(t)}$, which is recomputed in each epoch as the constellation evolves \cite{mahyoub2026_visibility}. Table~\ref{tab:notation} summarizes the complete notation used in this paper.

\begin{table}[!t]
  \renewcommand{\arraystretch}{1.2}
  \caption{Summary of Notation}
  \label{tab:notation}
  \centering
  \begin{tabular}{@{}cl@{}}
    \toprule
    \textbf{Symbol} & \textbf{Definition} \\
    \midrule
    $\calS$, $S$          & Satellite set and cardinality \\
    $\calN$, $N$          & Slice set and cardinality \\
    $\calF$               & VNF type set \\
    $\calI$, $I$          & Instance index set and cardinality \\
    $\calU_n$, $U_n$      & User set and size for slice $n$ \\
    $L_n$                 & Chain length of slice $n$ \\
    $f_l^n$               & VNF type at position $l$ of chain $n$ \\
    $R_f$       & Security sensitivity of function $f$ \\
    $C_n$       & Criticality of slice $n$ \\
    $\Phi_{n,n'}$ & Isolation policy coefficient \\
    $w_{n,n'}^{f}$          & Risk weight (Definition~\ref{def:risk_weight}) \\
    $D_f^{\text{mig}}$    & Migration disruption cost for function $f$ \\
    $O_{f,i,s}^{\text{actv}}$ & Activation CPU overhead of instance $(f,i,s)$ \\
    $O_{f,i,s}^{\text{incr}}$ & Per-user CPU cost of instance $(f,i,s)$ \\
    $O_{f,i,s}^{\text{proc}}$           & Processing delay of instance $(f,i,s)$ (ms) \\
    $\mathcal{C}_s^{\text{cpu}}$ & CPU capacity of satellite $s$ \\
    $\bar{T}_{n,u}$       & E2E delay budget for user $u$ in slice $n$ (ms) \\
    $d_{s,s'}^{\text{SP}}$ & Shortest-path ISL delay $s \to s'$ (ms) \\
    $d_{n,u,s}^{\text{acc}}$ & Access delay: user $u$ of slice $n$ to satellite $s$ (ms) \\
    $\Omega_{n,u}^{(t)}$  & Visible satellite set for user $u$ at epoch $t$ \\
    $\pi_{n,u}^{l}$ & Prior-assignment feasibility\\
    $F_{\max}$              & Maximum concurrent flows per ISL link \\
    $\rho$            & Proximal penalty weight \\
    $\alpha_{n,f,i,s}$         & Cross-slice occupancy estimate \\
    $\delta_{\mathrm{obj}}$ & Objective-stagnation tolerance \\
    $\beta_{n,u,l}^{i,s}$ & Placement decision variable \\
    $\gamma_{f,i,s}$   & VNF activation indicator \\
    $z_{n,l}^{i,s}$      & Slice-activation indicator \\
    $y_{n,n',l,l'}^{i,s}$ & Co-location linearisation variable \\
    $\mu_{n,u}^{l}$       & Avoidable migration indicator \\
    $v_{n,u,l}^{s,s'}$     & ISL-hop McCormick variable \\
    $\sigma_{n,u}^{l}$    &  stay indicator at prior assignment $(\hat\imath,\hat{s})$ \\
    $\xi_{n,u,l}^{s}$     &  aggregate satellite-assignment at position $l$ ($\sum_i \beta_{n,u,l}^{i,s}$) \\
    $\bar{U}$           &  maximum users per slice \\
    $L_{\max}$          &  maximum chain length \\
    \bottomrule
  \end{tabular}
\end{table}

At epoch~$t$, satellite~$s$ maintains \glspl{isl} to at most $k$~nearest neighbours (typically $k=4$: intra-plane fore/aft and inter-plane left/right \cite{Pachler2026}).
The direct-link propagation delay between adjacent satellites $s, s' \in \calS$ is $d_{s,s'}^{(t)} = \|\bm{r}_s^{(t)} - \bm{r}_{s'}^{(t)}\|_2 / c$, where 
$c$ is the speed of light \cite{mahyoub2026_visibility}. For non-adjacent satellite pairs $(s,s')\notin\mathcal{E}^{(t)}$, the satellite pairs' shortest-path (SP) delay
$d_{s,s'}^{\text{SP},(t)}$ is computed via Dijkstra's algorithm over $\mathcal{G}^{(t)}$, at every time epoch to reflect the evolving \gls{isl} graph.

Let $\calN = \{1,\ldots,N\}$ be the set of network slices, each representing an independent slice with a dedicated security service requirement.
The security requirements of a  slice $n \in \calN$ is characterised by an ordered \gls{sfc}
\(
  \calF_n = \{f_1^n, f_2^n, \ldots, f_{L_n}^n\},
\)
where $\calF$ is the set of admissible security \gls{vnf} types (e.g., $\calF = \{\text{FW}, \text{IDS}, \text{ENC}, \text{TM}, \text{SIEM}\}$), and $L_n = |\calF_n|$ is the chain length.
A slice~$n$ serves a set $\calU_n = \{1,\ldots,U_n\}$ of ground users, each with a known location (i.e., latitude and longitude). 
User~$u$ of slice~$n$ can communicate with satellite~$s$ at epoch~$t$ if and only if (iff) the elevation angle $\theta_{n,u,s}^{(t)}$ satisfies $\theta_{n,u,s}^{(t)} \geq \theta_{\min}$, where $\theta_{\min}$ is the minimum visibility elevation angle.
The visible satellite set for user~$u \in \calU_n$ at epoch~$t$ is defined as
\(
  \Omega_{n,u}^{(t)} = \bigl\{ s \in \calS |
    \theta_{n,u,s}^{(t)} \geq \theta_{\min} \bigr\}.
\)
Traffic from user $u$ must traverse the slice's \gls{sfc} in strict order $f_1^n \to f_2^n \to \cdots \to f_{L_n}^n$, with respect to its \gls{e2e} latency budget $\bar{T}_{n,u}$~ms.
For each function type $f \in \calF$ and satellite $s \in \calS$, there are $I$~candidate \gls{vnf} instances indexed by $\calI = \{1,\ldots,I\}$.
An instance $(f, i, s)$ (with $i \in \calI$) is characterised by different overheads, including fixed activation CPU overhead $O_{f,i,s}^{\text{actv}}$,  per-user incremental CPU load $O_{f,i,s}^{\text{incr}}$, and \gls{vnf} processing delay $O_{f,i,s}^{\text{proc}}$.

\subsection{Co-location Security Risk Model}
\label{subsec:risk_model}

The main security concern in multi-slice computing is \gls{vnf} co-location. When functions of different slices share the same physical instance on the same satellite, confidential traffic flows and cryptographic material may be exposed to cross-slice side-channel attacks~\cite{ristenpart2009hey,mahyoub_slicing_2025,AbdulGhaffar2024}.
We represent this risk via a multiplicative security model inspired by ISO~31000~\cite{iso31000} and NIST~SP~800-53~\cite{nist800}.
The ISO~31000 risk management vocabulary (likelihood $\times$ consequence, risk appetite, controls) informs the multiplicative structure of Definition~\ref{def:risk_weight} below.

\begin{definition}[Security Co-location  Risk Weight]
\label{def:risk_weight}
  The security co-location risk weight between slices $n, n' \in \calN$, $n < n'$, for function type $f \in \calF$ is defined as
  \begin{equation}
    w_{n,n'}^{f} = R_f \cdot \Phi_{n,n'} \cdot C_n \cdot C_{n'} \;\geq 0,
    \label{eq:risk_weight}
  \end{equation}
  where $R_f$ is the security sensitivity of function type~$f$ (higher for cryptographic or intrusion-detection functions), $C_n$ is the criticality of slice~$n$ (reflecting data classification level), and $\Phi_{n,n'}$ is the bilateral isolation policy coefficient between slices $n$ and $n'$, with $\Phi_{n,n'} = \Phi_{n',n}$ (symmetric).
  The multiplicative structure implies that any zero factor (zero sensitivity, zero criticality, or enforced isolation) collapses the full risk contribution to zero, consistent with zero-trust policy semantics.
\end{definition}

The NIST~SP~800-53  control families \cite{nist800}, relevant to multi-slice \gls{sfc} isolation, include: SC-2 (Application Partitioning), SC-7 (Boundary Protection), and SC-39 (Process Isolation). These controls map to the isolation policy coefficient $\Phi_{n,n'}$ and the activation of zero-trust enforcement ($\Phi_{n,n'} = 0$ corresponds to full SC-39 isolation between slices $n$ and $n'$).
Moreover, cross-slice isolation requirement on shared infrastructure is recognized across the 5G/NFV slicing-security literature ~\cite{mahyoub_slicing_2025,ali2020security}. 
Since every factor in~\eqref{eq:risk_weight} is symmetric in its slice indices, we extend the risk weight to all ordered pairs by setting $w_{n',n}^{f} \triangleq w_{n,n'}^{f}$ for $n' < n$.
\begin{definition}[Co-location Security Risk]
\label{def:risk_exact}
  The {co-location security risk} is calculated as
  \begin{align}
    \Risk &=
      \sum_{\substack{n,n' \in \calN \\ n < n'}}
      \sum_{f \in \calF_n \cap \calF_{n'}}
      \sum_{i \in \calI}
      \sum_{s \in \calS}
      w_{n,n'}^{f} ~
      \bigl|\mathcal{A}_{n,f}^{i,s}\bigr| \cdot 
      \bigl|\mathcal{A}_{n',f}^{i,s}\bigr|,
    \label{eq:risk_exact}
  \end{align}
  where $\mathcal{A}_{n,f}^{i,s}$ is the set of users of slice~$n$ assigned to instance $(f,i,s)$.
\end{definition}
We treat each pair of co-resident cross-slice flows on a shared instance as one independent side-channel exposure~\cite{ristenpart2009hey}. The per-instance term $|\mathcal{A}_{n,f}^{i,s}| \cdot |\mathcal{A}_{n',f}^{i,s}|$ in~\eqref{eq:risk_exact} counts these exposure pairs, generalizing the binary co-residency model standard in the side-channel literature. Following the ISO~31000 likelihood~$\times$~consequence structure, each contribution factorises into a consequence component $(R_f\,C_n C_{n'})$ and a \emph{likelihood} component $(\Phi_{n,n'}\,~|\mathcal{A}_{n,f}^{i,s}|~|\mathcal{A}_{n',f}^{i,s}|)$.

Direct minimization of $\Risk$ is intractable as it involves products of integer sums. Hence, we introduce the coarse co-location indicator
\begin{equation}
  z_{n,l}^{i,s} \in \{0,1\},\quad
  z_{n,l}^{i,s} = 1 \iff \exists\, u \in \calU_n |
    \beta_{n,u,l}^{i,s} = 1,
  \label{eq:z_def}
\end{equation}
indicating whether there is a user of slice $n$ that is served by instance $(f,i,s)$ or not, and where $\beta_{n,u,l}^{i,s}\!\in\!\{0,1\}$ is a binary variable equals~1 iff the $l$-th \gls{vnf} of $\calF_n$ is assigned to instance~$i$ on satellite~$s$ for a user $u$ belonging to slice $n$. For the coarse co-location indicator $ z_{n,l}^{i,s}$, we establish the following analytic bounds.

\begin{proposition}[Co-location Security Risk Bounds]
\label{prop:risk_bounds}
  For any assignment $\bfbeta=[\beta_{n,u,l}^{i,s}]_{n\in\calN,\;u\in\calU_n,\; l\in\{1,\ldots,L_n\},\;i\in\calI,\;s\in\calS}$, define the bilinear co-location indicator
  \begin{equation}
    y_{n,n',l,l'}^{i,s} = z_{n,l}^{i,s} \cdot z_{n',l'}^{i,s},
    \label{eq:y_bilinear}
  \end{equation}
  where $(n,l)$ and $(n',l')$ share the same function type ($f_l^n = f_{l'}^{n'} = f$).
  Then,
\(
    \Risk^{\text{LB}} \;\leq\; \Risk \;\leq\; \Risk^{\text{UB}},
    \label{eq:risk_bounds}
\)
  where
  \begin{align}
    \Risk^{\text{LB}} &=
      \sum_{\substack{n,n' \in \calN,\; n < n' \\l \in [L_n],\; l' \in [L_{n'}] }}
      \sum_{i \in \calI,\; s \in \calS}
      w_{n,n'}^{f}\, ~ y_{n,n',l,l'}^{i,s},
      \label{eq:risk_lb} \\[4pt]
    \Risk^{\text{UB}} &=
      \sum_{\substack{n,n' \in \calN,\; n < n' \\l \in [L_n],\; l' \in [L_{n'}] }}
      \sum_{i \in \calI,\; s \in \calS}
      w_{n,n'}^{f}\, ~ |\calU_n|\cdot |\calU_{n'}|\,  ~ y_{n,n',l,l'}^{i,s}.
      \label{eq:risk_ub}
  \end{align}
  \end{proposition}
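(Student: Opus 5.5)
The plan is to prove both inequalities term by term. I will match each summand of $\Risk$ in \eqref{eq:risk_exact} with the corresponding summand of $\Risk^{\text{LB}}$ and $\Risk^{\text{UB}}$, and then use that all weights are nonnegative (Definition~\ref{def:risk_weight}). The whole argument rests on one elementary sandwich for each instance $(f,i,s)$ and each slice pair.

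\textbf{Step 1: reindexing.} I will take a chain $\calF_n$ to contain each function type at most once, which is implicit in writing $\calF_n \cap \calF_{n'}$ as a set intersection. Then each $f \in \calF_n \cap \calF_{n'}$ corresponds to exactly one pair of positions $(l,l')$ with $f_l^n = f_{l'}^{n'} = f$. Conversely, every pair $(l,l')$ that appears in \eqref{eq:risk_lb}--\eqref{eq:risk_ub} is such a pair, since $y$ is defined only for positions sharing a function type. So all three sums run over the same index set $\{(n,n',l,l',i,s)\}$.

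For such an index I write $a := |\mathcal{A}_{n,f}^{i,s}| = \sum_{u\in\calU_n} \beta_{n,u,l}^{i,s}$ and $b := |\mathcal{A}_{n',f}^{i,s}| = \sum_{u\in\calU_{n'}} \beta_{n',u,l'}^{i,s}$. By \eqref{eq:z_def}, $z_{n,l}^{i,s} = \mathbb{1}[a \ge 1]$ and $z_{n',l'}^{i,s} = \mathbb{1}[b\ge 1]$. Hence $y_{n,n',l,l'}^{i,s} = \mathbb{1}[a\ge1,\ b\ge1]$ by \eqref{eq:y_bilinear}.

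\textbf{Step 2: the per-term sandwich.} I will show $y \le ab \le |\calU_n|\,|\calU_{n'}|\, y$ by two cases.
\begin{itemize}
\item If $a=0$ or $b=0$, then $y=0$ and $ab=0$, so all three quantities vanish.
\item Otherwise $y=1$ and $a,b$ are positive integers, so $ab \ge 1$. Also $\mathcal{A}_{n,f}^{i,s}\subseteq \calU_n$ and $\mathcal{A}_{n',f}^{i,s}\subseteq\calU_{n'}$, which gives $a\le|\calU_n|$ and $b\le|\calU_{n'}|$, hence $ab \le |\calU_n|\,|\calU_{n'}|$.
\end{itemize}

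\textbf{Step 3: summing.} I multiply the sandwich by $w_{n,n'}^f \ge 0$, which preserves both inequalities. Summing over all indices then yields $\Risk^{\text{LB}} \le \Risk \le \Risk^{\text{UB}}$.

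The only step that needs care is the bookkeeping in Step 1. If a type could recur within one chain, $|\mathcal{A}_{n,f}^{i,s}|$ would aggregate several positions, and the position-indexed lower bound could overcount. In that case I would either state the distinct-type assumption explicitly or redefine $z$ per function type rather than per position. The upper bound would then use $L_{\max}|\calU_n|$ in place of $|\calU_n|$.
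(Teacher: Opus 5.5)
Your proof is correct and follows the same route as the paper's sketch. The paper also uses a per-instance case split: $y_{n,n',l,l'}^{i,s}=0$ forces one assignment set to be empty, while $y_{n,n',l,l'}^{i,s}=1$ gives $1 \le |\mathcal{A}_{n,f}^{i,s}|\,|\mathcal{A}_{n',f}^{i,s}| \le |\calU_n|\,|\calU_{n'}|$. It then sums with the nonnegative weights $w_{n,n'}^{f}$, as you do.

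Your Step~1 bookkeeping makes explicit an assumption the paper leaves implicit: each chain contains distinct function types. This is guaranteed by $L_n = |\calF_n|$, and it is genuinely needed for the lower bound to hold.
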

  \begin{proof}[Proof sketch]
  When $y_{n,n',l,l'}^{i,s}=1$, at least one user in $\calU_n$ and at least one user in $\calU_{n'}$ use instance $(f,i,s)$, such that
  $|\mathcal{A}_{n,f}^{i,s}| \cdot |\mathcal{A}_{n',f}^{i,s}| \geq 1 = y_{n,n',l,l'}^{i,s}$, establishing the lower bound.
  Since $|\mathcal{A}_{n,f}^{i,s}| \leq |\calU_n|$ and $|\mathcal{A}_{n',f}^{i,s}| \leq |\calU_{n'}|$,  their product is at most $|\calU_n|\cdot |\calU_{n'}|$, establishing the upper bound. When $y_{n,n',l,l'}^{i,s}=0$, at least one of the sets is empty, so the product is zero and all bounds are tight.
 \end{proof}

Note that $\Risk^{\text{LB}}$ in \eqref{eq:risk_lb} can be seen as the binary co-residency risk standard in the side-channel attacks. It charges $w_{n,n'}^{f}$ once per co-resident slice pair, independent of workload. 
The exact risk~\eqref{eq:risk_exact} and worst-case bound~\eqref{eq:risk_ub} satisfy $\Risk^{\text{LB}} \le \Risk \le \Risk^{\text{UB}}$ and serve as evaluation metrics. Since $w_{n,n'}^{f}\ge 0$, $\Risk^{\text{LB}}=0$ iff $\Risk=0$ (Definition~\ref{def:risk_exact}). Away from zero, the per-slice consolidation provides $\Risk = U^{2}\,\Risk^{\text{LB}}$, thus, $\Risk^{\text{LB}}$ is an order-preserving constant-factor surrogate for $\Risk$.

\section{Problem Formulation}
\label{sec:problem_formulation}
This section formalizes the single-epoch placement problem, defines the binary decision variables, and states the feasibility constraints. For readability, the epoch index $t$ is omitted from the per-epoch variables and constants throughout the paper.

\subsection{Decision Variables}
\label{subsec:variables}
The decision variables in our system are as follows:
\begin{itemize}
\item $\beta_{n,u,l}^{i,s} \in \{0,1\}:$ \gls{vnf}-usage indicator, equals~1 iff the $l$-th \gls{vnf} of $\calF_n$ is assigned to instance~$i$ on satellite~$s$ for a user $u$ belonging to slice $n$. We refer to the triple $(n,u,l)$ as a chain placement,
\item $\gamma_{f,i,s} \in \{0,1\}$:   \gls{vnf} activation indicator, equal to~1 iff at least one user is served by instance $(f,i,s)$, 
\item $z_{n,l}^{i,s}\!\in\!\{0,1\}$:   instance-usage indicator, equal to~1 iff any user of slice~$n$ uses instance $(f_l^n,i,s)$, 
\item $\mu_{n,u}^{l}\!\in\!\{0,1\}$: avoidable-migration indicator, equal to~1 iff the prior-epoch assignment remained feasible, but the solver chose otherwise.
\end{itemize}

\subsection{Constraint Set}
\label{subsec:constraints}
A feasible placement must satisfy the following six constraint families, governing user-to-instance assignment (C1), instance activation (C2), satellite CPU capacity (C3), avoidable-migration accounting (C4), \gls{e2e} delay (C5), and \gls{isl} link capacity (C6).

\noindent\textbf{(C1) Unique assignment:}
Each user $u \in \calU_n$ must be assigned to exactly one instance at each chain position~$l$ of $\calF_n$:
\begin{equation}
  \sum_{i \in \calI}\sum_{s \in \calS}
    \beta_{n,u,l}^{i,s} = 1,
  \quad \forall\, n \in \calN,\; u \in \calU_n,\; l \in \{1,\ldots,L_n\}.
  \tag{C1}\label{C1}
\end{equation}
When $l = 1$, the summation over $s$ is restricted to the visible satellite set $\Omega_{n,u}^{(t)}$. The \gls{sfc}'s \glspl{vnf} at positions $l>1$ may therefore reside on any satellite in $\calS$, including satellites not visible to the user, and are reached via the SP \gls{isl} route. 

\noindent\textbf{(C2) Activation coupling:}
A \gls{vnf} instance $(f,i,s)$ is activated only when used by at least one user.  This \gls{vnf} activation coupling is enforced by:
\begin{align}
  \beta_{n,u,l}^{i,s} &\leq \gamma_{f_l^n,\,i,\,s},
  \; \forall\, n \in \calN, u \in \calU_n, l\in \{1,\ldots,L_n\}, i \in \calI,\nonumber \\ s \in \calS,
    \tag{C2a}\label{C2a} \\
  \gamma_{f,i,s} &\leq
    \sum_{\substack{n \in \calN,\; u \in \calU_n \\
    \{l|\, f_l^n = f\}}}
    \beta_{n,u,l}^{i,s},
    \quad \forall\, f \in \calF,\; i \in \calI,\; s \in \calS.
    \tag{C2b}\label{C2b}
\end{align}

\noindent\textbf{(C3) Satellite CPU capacity:} The total CPU load on satellite~$s$  is
\vspace{-.2cm}
\begin{align}
  \text{CPU}_s
    &= \sum_{f \in \calF}\sum_{i \in \calI}
        O_{f,i,s}^{\text{actv}}\, ~ \gamma_{f,i,s}\nonumber \\
    &+ \sum_{n \in \calN}\sum_{u \in \calU_n}\sum_{l=1}^{L_n}\sum_{i \in \calI}
        O_{f_l^n,i,s}^{\text{incr}}\, ~ \beta_{n,u,l}^{i,s},
  \label{eq:cpu_load}
\end{align}
Consequently, the aggregate CPU load on each satellite must not exceed its capacity:
\begin{align}
\text{CPU}_s
  \leq \mathcal{C}_s^{\text{cpu}},
  \quad \forall\, s \in \calS.
  \tag{C3}\label{C3}
\end{align}

\noindent\textbf{(C4) Avoidable migration epigraph:} Let \text{Mig} denote the total migration disruption cost calculated as
\begin{equation}
  \text{Mig}
    = \sum_{n \in \calN}\sum_{u \in \calU_n}\sum_{l=1}^{L_n}
      D_{f_l^n}^{\text{mig}}\, ~ \mu_{n,u}^{l}, 
  \label{eq:migcost}
\end{equation}
where $D_{f_l^n}^{\text{mig}}$ is the migration disruption cost for the security function $f$.
Let $(\hat{\imath},\hat{s})$ denote the prior-epoch assignment of chain placement $(n,u,l)$ and $\pi_{n,u}^{l} \in \{0,1\}$ be the precomputed feasibility flag, i.e.,  $\pi_{n,u}^{l} = 1$ iff satellite~$\hat{s}$ is visible to user~$u$ at the current epoch and the residual capacity of~$\hat{s}$ can accommodate the user.
We define the stay indicator as the shorthand $\sigma_{n,u}^{l} \triangleq \beta_{n,u,l}^{\hat{\imath},\hat{s}} \in \{0,1\}$.
The avoidable migration variable satisfies
\(
  \mu_{n,u}^{l} = \max\bigl(0,\; \pi_{n,u}^{l} - \sigma_{n,u}^{l}\bigr),
  \label{eq:mu_prop}
\)
which is linearised as the three-constraint epigraph:
\begin{align}
  \mu_{n,u}^{l} &\geq \pi_{n,u}^{l} - \sigma_{n,u}^{l},
    \tag{C4a}\label{C4a} \\
  \mu_{n,u}^{l} &\leq \pi_{n,u}^{l},
    \tag{C4b}\label{C4b} \\
  \mu_{n,u}^{l} &\leq 1 - \sigma_{n,u}^{l}.
    \tag{C4c}\label{C4c}
\end{align}
When $\pi_{n,u}^{l} = 0$ (prior assignment infeasible), \eqref{C4b}--\eqref{C4c} force $\mu_{n,u}^{l}=0$, excluding involuntary handovers from the penalty. The flag $\pi_{n,u}^{l}$ is necessarily precomputed from epoch-$(t{-}1)$ information. Indeed, the current-epoch cross-slice load on $\hat{s}$ is unknown until the joint placement is solved, so letting $\pi$ depend on it would be circular. Hence, it estimates the residual capacity of $\hat{s}$ from the prior committed load. The \gls{milp} still enforces the exact capacity constraint~\eqref{C3}, such that every returned placement is feasible, while only the avoidable-versus-forced classification is approximate. If current-epoch contention forces a user off a prior satellite that appeared feasible under the $(t{-}1)$ estimate, the migration is deemed avoidable. Hence, the approximation is conservative as it can only inflate the proposed methods' reported avoidable-migration count, but never deflate it. 

\noindent\textbf{(C5) \gls{e2e} delay budget:}
The total E2E latency of each user must not exceed its budget $\bar{T}_{n,u}$, i.e.,
\begin{multline}
  \underbrace{\sum_{i\in \calI,s \in \Omega_{n,u}^{(t)}}
    d_{n,u,s}^{\text{acc}}\, ~ \beta_{n,u,1}^{i,s}}_{\text{access delay}}
  + \underbrace{\sum_{l=1}^{L_n}\sum_{i \in \calI,s \in \calS}
 O_{f_l^n,i,s}^{\text{proc}}\, ~ \beta_{n,u,l}^{i,s}}_{\text{processing delays}} \\
  + \underbrace{\sum_{l=1}^{L_n-1}\sum_{s \in \calS, s \neq s'}
    d_{s,s'}^{\text{SP}}\, ~ v_{n,u,l}^{s,s'}}_{\text{ISL propagation}}
  \leq \bar{T}_{n,u},  \;\forall\, n \in \calN, u \in \calU_n, 
  \tag{C5}\label{C5}
\end{multline}
where the McCormick variable $v_{n,u,l}^{s,s'} \in [0,1]$ linearizes the ISL routing product (see \eqref{eq:h_lb}--\eqref{eq:h_ub2}  below) and it is created only for reachable $(s,s')$ pairs to limit the model size.

\noindent\textbf{(C6) \gls{isl} link capacity:}
The aggregate number of user flows routed over each directed \gls{isl} must not exceed the link's flow capacity~$F_{\max}$, i.e.,
\begin{equation}
  \sum_{n \in \calN}\sum_{u \in \calU_n}\sum_{l=1}^{L_n-1}
    v_{n,u,l}^{s,s'}
  \leq F_{\max},
  \quad \forall\,(s,s') \in \mathcal{E}^{(t)}.
  \tag{C6}\label{C6}
\end{equation}

\subsection{Linearisation of Bilinear Terms}
\label{subsec:linearisation}

The above formulations contain two classes of bilinear products that must be linearized for \gls{milp} tractability. The normalization is discussed in the following two subsections.

\subsubsection{Co-location Security Risk Linearisation}
The indicator $y_{n,n',l,l'}^{i,s} = z_{n,l}^{i,s} \cdot  z_{n',l'}^{i,s}$ in \eqref{eq:risk_lb} is the product of two binary variables. Since both factors are binary, the standard McCormick~\cite{mccormick1976computability} linearisation reduces to the three-constraint LP relaxation:
\begin{align}
  y_{n,n',l,l'}^{i,s} &\leq z_{n,l}^{i,s},
    \label{eq:y_mcc1} \\
  y_{n,n',l,l'}^{i,s} &\leq z_{n',l'}^{i,s},
    \label{eq:y_mcc2} \\
  y_{n,n',l,l'}^{i,s} &\geq z_{n,l}^{i,s} + z_{n',l'}^{i,s} - 1.
    \label{eq:y_mcc3}
\end{align}
The slice-activation indicator $z_{n,l}^{i,s}$ is related to
$\bfbeta$ via
\begin{align}
  z_{n,l}^{i,s} \geq \beta_{n,u,l}^{i,s},
    \text{ and } 
  z_{n,l}^{i,s} \leq \sum_{u \in \calU_n} \beta_{n,u,l}^{i,s}.
    \label{eq:z_ub}
\end{align}
Since $\beta_{n,u,l}^{i,s}$ and $z_{n,l}^{i,s}$ are binary and C1 guarantees $\sum_{i\in \calI,s\in \calS}\beta_{n,u,l}^{i,s}=1$, the relaxation of the McCormick envelope is exact. At any integer-feasible solution, $y_{n,n',l,l'}^{i,s}$ take binary values and are exactly equal to $z_{n,l}^{i,s} \cdot z_{n',l'}^{i,s}$.

\subsubsection{ISL Delay Linearization}
The \gls{isl} propagation term in \eqref{C5} depends on the pair of satellites hosting two consecutive \glspl{vnf}. Let \(\xi_{n,u,l}^{s} \triangleq \sum_{i \in \calI} \beta_{n,u,l}^{i,s} \in \{0,1\}
\)
be the aggregate satellite-assignment indicator, i.e., equal to~1 iff the $l$-th \gls{vnf} of user~$u$ is placed on satellite~$s$, regardless of the instance. Then, the product $\xi_{n,u,l}^{s} \cdot \xi_{n,u,l+1}^{s'}$ 
equals~1 when user~$u$ routes \gls{vnf}~$l$ through~$s$ and \gls{vnf}~$(l+1)$ through~$s'$ (i.e., when the flow traverses the directed \gls{isl} $(s,s')$ between hops $l$ and $l+1$).
This conjunction of two consecutive placement decisions is irreducibly bilinear. Therefore, we keep the aggregated indicator and introduce a single auxiliary variable $v_{n,u,l}^{s,s'} \in [0,1]$ per reachable hop, linearized by the binary-product (McCormick) envelope as follows:
\begin{align}
  v_{n,u,l}^{s,s'} \geq \xi_{n,u,l}^{s} + \xi_{n,u,l+1}^{s'} - 1,
    \label{eq:h_lb} \\
  v_{n,u,l}^{s,s'} \leq \xi_{n,u,l}^{s},
    \text{ and }
  v_{n,u,l}^{s,s'} \leq \xi_{n,u,l+1}^{s'},
    \label{eq:h_ub2}
\end{align}
for all chain placements $(n,u,l)$ and all satellite pairs $(s,s') \in \calS^2$ with $s \neq s'$ and $d_{s,s'}^{\text{SP}} > 0$.
Since C1 imposes $\xi_{n,u,l}^{s} \in \{0,1\}$, the envelope is exact at every integer-feasible point, as \eqref{eq:h_ub2} pins $v_{n,u,l}^{s,s'}$ to~$0$ unless both hops are active, and \eqref{eq:h_lb} forces it to~$1$ when they are active. Hence, $v_{n,u,l}^{s,s'} = \xi_{n,u,l}^{s}\,~\xi_{n,u,l+1}^{s'}$ holds.

\subsection{Objective Function}
\label{subsec:milp_complete}

The main objective is to minimize the weighted co-location security risk and \gls{vnf} migration cost, as follows:
\begin{align}
  \min_{\bfbeta,\bfgamma,\mathbf{z},\mathbf{y},\bm{\mu}}
  \quad&
  \omega_{\text{risk}}\,\frac{\Risk^{\text{LB}}}{\bar{R}}
  + \omega_{\text{mig}}\, \frac{\text{Mig}}{\bar{M}}
  \label{eq:milp_obj}, \\[2pt]
  \text{s.t.} \quad
  & \eqref{C1}\text{--}\eqref{C6},  
   \eqref{eq:y_mcc1}\text{--}\eqref{eq:h_ub2}.\nonumber
\end{align}
The selected values of the weighting vector $(\omega_{\text{risk}}, \omega_{\text{mig}})$ reflect the importance of each term in multi-slice \gls{leo} deployments.
The normalization constants $\bar{R}$ and $\bar{M}$ are computed as analytic upper bounds in a preprocessing step, so that each normalized term lies in $[0,1]$, making the objective a convex combination that avoids scale-induced bias.
The dominant variables count is from the $y$-variables (co-location indicators), whose number is $O(N^2 L_{\max}^2 IS)$ in the joint formulation.
This quadratic scaling motivates the \gls{admm} decomposition developed in the section below.

\section{Proposed \gls{admm}-inspired Decomposition Framework}
\label{sec:admm}

\subsection{\gls{admm}-inspired Decomposition Principle}
\label{subsec:admm_decomp}

For large constellations and several slices, the joint \gls{milp} formulated in \eqref{eq:milp_obj} is computationally intractable due to the $O(N^2 L_{\max}^2 IS)$ co-location variables and the joint optimization over $N ~ \sum_{n \in \calN} U_n$ users' combinations.
We exploit the spatial separability of the problem, where the per-slice placement subproblems become independent when the activations of the other slices on each instance are fixed.
Hence, we replace the joint solve with a penalized per-slice best response, i.e., an \gls{admm}-inspired block decomposition~\cite{boyd2011distributed}, similar to what is used for scalable structured optimization \cite{yu2021distributed,Wang2021,Asheralieva2024}. In this work, \gls{admm} is customized to multi-slice binary \gls{sfc} placement with cross-slice co-location coupling.
Let $\gamma_{n,f,i,s} \in \{0,1\}$ denote the per-slice activation of instance $(f,i,s)$ by slice~$n$ (equivalently $\max_{\{l|\,f_l^n = f\}} z_{n,l}^{i,s}$), the slice-local counterpart of the global activation indicator $\gamma_{f,i,s}$ in~\eqref{eq:cpu_load}.
The cross-slice coupling is carried by an occupancy estimate $\alpha_{n,f,i,s} \in [0,1]$ of the other slices' activations on instance $(f,i,s)$, maintained across \gls{admm} sweeps and initialized to $\alpha^{(0)} = 0$.

Let 
$\bfbeta^{(t)}$ denote the committed placement at epoch~$t$ and $\bfbeta^{(t-1)}$ the previous-epoch placement used as the fixed migration reference. Also, let $\tbeta^{(k)}$ and $\hbeta$ denote the merged and repaired placements, respectively, sharing the same structure.
For each slice~$n$, let $\bfbeta_n$, $\bfgamma_n$, and $\bm{\mu}_n$ denote the restrictions of the placement, activation, and avoidable-migration variables to slice~$n$ (i.e., the entries   $\beta_{n,u,l}^{i,s}$, $\gamma_{n,f,i,s}$, and $\mu_{n,u}^{l}$ that carry slice index~$n$). Moreover, let a superscript $^{(k)}$ marks the value of these blocks at ADMM sweep~$k$, such that $\bfbeta_n^{(k)}$ is slice~$n$'s placement after the sweep-$k$ solve. At each \gls{admm} sweep~$k$, the \emph{x-step} solves $N$ per-slice \glspl{milp} as follows:
\begin{align}
  \min_{\bfbeta_n, \bfgamma_n, \bm{\mu}_n}
  \quad
  & \omega_{\text{mig}}~ \frac{\,\text{Mig}_n}{\bar{M}}
  + \sum_{f,i,s}
      \phi_{n,f,i,s}^{(k)}\,~\gamma_{n,f,i,s},
  \label{eq:admm_xstep} \\
  \text{s.t.} \quad
  & \eqref{C1},\;\eqref{C2a}\text{--}\eqref{C2b},\;
    \eqref{eq:c3rem},\;
    \eqref{C4a}\text{--}\eqref{C4c},\;\eqref{C5}, \nonumber
\end{align}
where the residual-capacity constraint \eqref{eq:c3rem} applies the joint CPU bound of constraint ~\textbf{\eqref{C3}} to slice~$n$ only, against the capacity left by the other slices:
\begin{equation}
  \text{CPU}_{n,s} \;\le\; \mathcal{C}_s^{\text{cpu}} - \Lambda_{n,s}^{(k)},
  \quad \forall\, s \in \calS,
  \tag{C3$^{\text{rem}}_n$}\label{eq:c3rem}
\end{equation}
where $\text{CPU}_{n,s}$ is slice~$n$'s CPU load on satellite~$s$ and $\Lambda_{n,s}^{(k)}$ denotes the CPU load  reserved by the other slices. $\Lambda_{n,s}^{(k)}$ is set differently by the two coordination schedules (S-\gls{admm}) and (P-\gls{admm}) developed in Subsections~\ref{subsec:sadmm} and~\ref{subsec:padmm}, respectively, which differ only in which other slices' loads are reserved. Fig. \ref{fig:framework} illustrates the E2E overview of the proposed framework along with the decomposition strategy.
\begin{figure*}[!t]
  \centering
  \resizebox{\textwidth}{!}{%
  \begin{tikzpicture}[
    font=\footnotesize,
    >={Latex[length=2.2mm]},
    box/.style={rectangle, rounded corners=2pt, draw=black, thick,
                align=center, inner sep=3pt, font=\scriptsize},
    io/.style={box, fill=cIn, text width=3.0cm, minimum height=0.95cm},
    store/.style={cylinder, shape border rotate=90, aspect=0.22, draw=black,
                  thick, fill=cOut, align=center, font=\scriptsize,
                  minimum width=1.7cm, minimum height=1.2cm, inner sep=1pt},
    comp/.style={box, minimum height=1.0cm},
    flow/.style={->, thick},
    fb/.style={->, thick, dashed},
    coord/.style={->, semithick, densely dotted},
    junc/.style={circle, fill=black, inner sep=1.3pt},
    badge/.style={circle, draw=cRiskB, fill=cRisk, line width=0.7pt,
                  inner sep=0.5pt, minimum size=3.4mm, font=\bfseries\tiny, text=black}
  ]
  \node[io] (topo) at (1.6,3.2)
       {Topology snapshot $\mathcal{G}^{(t)}$\\[1pt] ISL graph, Dijkstra SP, visibility $\Omega^{(t)}$};
  \node[io] (slice) at (1.6,1.8)
       {Slices \& SFCs \& Users};
  \node[io, fill=cRisk, draw=cRiskB] (risk) at (1.6,0.5)
       {Co-location risk model\\ $\Rightarrow$ weights $w^{f}_{n,n'}$};
  \node[store] (prev) at (2.35,-1) {$\bm{\beta}^{(t-1)},\ \pi$\\ avoid./forced};
  \node[comp, fill=cSub, text width=2.7cm] (sub) at (6.9,1.4)
       {$N$ per-slice MILP $x$-steps\\ (each $O(N)$ variables)};
  \begin{scope}[on background layer]
    \node[comp, fill=cSub, text width=2.7cm, minimum height=1.0cm] at ($(sub)+(0.12,0.12)$){};
    \node[comp, fill=cSub, text width=2.7cm, minimum height=1.0cm] at ($(sub)+(0.24,0.24)$){};
  \end{scope}
  \node[comp, fill=cSub, text width=2.7cm] (pen) at (6.9,3.5)
       {Linear penalty $\phi_n^{(k)}$:\\ occupancy coupling\\ $+\,\rho(\tfrac12-\gamma^{(k-1)})$ anchor};
  \node[comp, fill=cS, draw=black, text width=2.55cm] (sadmm) at (10.6,2.45)
       {\textbf{S-ADMM} (Gauss--Seidel)\\ Occupancy refresh};
  \node[comp, fill=cP, draw=black, dashed, text width=2.55cm] (padmm) at (10.6,0.35)
       {\textbf{P-ADMM} (Jacobi, parallel)};
  \node[comp, fill=cMerge, text width=2.0cm] (merge) at (13.9,1.4)
       {Merge \&\\ incumbent\\ tracking};
  \begin{scope}[on background layer]
    \node[draw=black, line width=1pt, rounded corners=4pt, fill=cEng,
          fit=(pen)(sub)(sadmm)(padmm)(merge), inner xsep=11pt, inner ysep=9pt] (engine) {};
  \end{scope}
  \node[anchor=west, font=\scriptsize\bfseries, text=white, fill=cTitle,
        rounded corners=2pt, inner sep=3pt] (etitle) at ($(engine.north west)+(0.15,0.30)$)
        {Penalized Per-Slice Best-Response ADMM  Decomposition};
  \node[box, fill=cRef, draw=black, dashed, text width=7.4cm] (joint) at (8.4,5.95)
       {\textbf{Joint multi-slice MILP}\\[2pt]
        $O(N^{2}L_{\max}^{2}IS)$ co-location variables $\Rightarrow$ intractable for large $N$};
  \draw[flow] (joint.south) -- node[right,font=\scriptsize,align=left,xshift=2pt]
       {decompose: replace $O(N^{2})$ coupling\\ with $N$ subproblems $+$ linear penalty}
       (joint.south |- etitle.north);
  \node[store, right=1cm of merge, fill=cOut] (out) {$\bm{\beta}^{(t)}$\\ $(\beta,\gamma,\mu)$};
  \node[comp, fill=cP, draw=black, dashed, text width=.9cm, font=\scriptsize]
        (repair) at ($(merge)!0.3!(out)+(0,-1.3)$)
        {Collision repair};
  \node[junc] (bus) at (5.0,1.4) {};
  \draw[flow] (topo.east)  -- ++(0.5,0) |- (bus);
  \draw[flow] (slice.east) -- ++(0.9,0) -- (bus);
  \draw[flow] (prev.east)  -- ++(0.5,0) |- (bus);
  \draw[flow] (bus) -- (sub.west);
  \draw[flow] (risk.east) -- ++(0.9,0) |- (pen.west);
  \draw[flow] (pen) -- (sub);
  \draw[flow] (sub.east) -- ++(0.45,0) |- (sadmm.west);
  \draw[flow] (sub.east) -- ++(0.45,0) |- (padmm.west);
  \draw[flow] (sadmm.east) -| (merge.north);
  \draw[flow] (padmm.east) -| (merge.south);
  \draw[flow] (merge) -- (out);
  \draw[flow] (merge.south) |- (repair.west);
  \draw[flow] (repair.east) -| (out.south west);
  \draw[fb] (out.south) |- (2.35,-2) -- (prev.south);
  \node[font=\scriptsize, anchor=south] at (8.4,-2)
       {$t\!\leftarrow\!t\!+\!1$:\ topology rotates;\quad
        $\bm{\beta}^{(t)}\!\to$ migration reference $\bm{\beta}^{(t-1)}$};
  \end{tikzpicture}}
  \caption{Overview of the proposed security- and migration-aware \gls{sfc} placement framework with \gls{admm} decomposition.
  }
  \label{fig:framework}
\end{figure*} 

The sequential schedule (S-\gls{admm}) reserves the loads of slices already solved in the current sweep ($n' < n$), whereas the parallel schedule (P-\gls{admm}) reserves all other slices' loads from the previous sweep ($n' \neq n$). Accordingly,
\begin{equation}
  \Lambda_{n,s}^{(k)} =
  \begin{cases}
    \displaystyle\sum_{n' < n} \ell_{n',s}^{(k)}
      & \text{(S-\gls{admm}),}\\[6pt]
    \displaystyle\sum_{n' \neq n} \ell_{n',s}^{(k-1)} = L_s^{(k-1)} - \ell_{n,s}^{(k-1)}
      & \text{(P-\gls{admm}),}
  \end{cases}
  \label{eq:residual_reserve}
\end{equation}
with $\ell_{n,s}^{(k)} = \text{CPU}_{n,s}$ the load slice~$n$ commits on satellite~$s$ in ADMM sweep~$k$ and $L_s^{(k)} = \sum_{n\in \calN} \ell_{n,s}^{(k)}$. Reserving from the previous sweep is what permits P-\gls{admm}'s fully parallel solves.

The co-location $y$-variables are absent from each per-slice subproblem and their cross-slice coupling is transferred entirely to the linear penalty $\phi_{n,f,i,s}^{(k)}$ on the activation $\gamma_{n,f,i,s}$, which linearizes the joint risk of the current occupancy estimates $\{\alpha_{n',f,i,s}\}_{n' \neq n}$ as follows:
\begin{equation}
  \phi_{n,f,i,s}^{(k)} =
    \frac{\omega_{\text{risk}}}{\bar{R}}
    \sum_{\substack{n'n, \in \calN \\n' \neq n}} w_{n,n'}^{f}\, ~ \alpha_{n',f,i,s}
    + \rho \Bigl(\tfrac{1}{2} - \gamma_{n,f,i,s}^{(k-1)}\Bigr),
  \label{eq:admm_penalty}
\end{equation}
where the first term is the linearized cross-slice co-location risk evaluated at the other slices' occupancy, and the second is a proximal anchor of strength $\rho > 0$ to slice~$n$'s own activation from the preceding sweep, $\gamma_{n,f,i,s}^{(k-1)}$ (with $\gamma^{(0)} = 0$), which damps inter-sweep oscillation. Instead of the dual-ascent step of consensus \gls{admm}, this decomposition maintains no dual variable and the coordination is carried entirely by the primal occupancy estimate $\alpha_{n',f,i,s}$ and the primal proximal anchor, such that $\phi_{n,f,i,s}^{(k)}$ is a linear coefficient on $\gamma_{n,f,i,s}$. 
Therefore, our approach should be viewed as a \emph{penalized best-response} \gls{admm}-inspired decomposition rather than a canonical consensus-\gls{admm} formulation. Correspondingly, our convergence result is the best-response fixed-point guarantee of Proposition~\ref{prop:convergence}, rather than the primal--dual convergence guarantee associated with conventional \gls{admm}. 
The second term in \eqref{eq:admm_penalty} is the exact linear representation of the squared proximal term $\tfrac{\rho}{2}\lVert\gamma_{n,f,i,s}-\gamma_{n,f,i,s}^{(k-1)}\rVert^2$ under binary activation variables. Since $\gamma \in \{0,1\}$ implies $\gamma^2 = \gamma$, the quadratic  term can be rewritten as $\rho(\tfrac{1}{2}-\gamma_{n,f,i,s}^{(k-1)})\,~\gamma_{n,f,i,s}$ plus a constant in $\gamma_{n,f,i,s}$. Consequently, the proximal anchor remains the previous activation state $\gamma_{n,f,i,s}^{(k-1)}$, as in standard proximal methods, while the coefficient $\tfrac{1}{2}$ arises solely from the binary nature of the decision variable and does not constitute an additional tuning parameter. When $\gamma_{n,f,i,s}^{(k-1)} = 0$, the second term becomes $+\rho/2$ interpreted as the cost of activating an instance and thus encourages retention of the previous inactive state. 
Conversely, when $\gamma_{n,f,i,s}^{(k-1)} = 1$,  the second term becomes $-\rho/2$, providing an equal incentive to preserve the previous active state.  These two cases are symmetric manifestations of the same proximal regularization mechanism, whose influence may be outweighed whenever the risk or migration terms yield a sufficiently large reduction in the per-slice objective. 

After each x-step, the occupancy estimate is updated to match the realized activation:
\begin{equation}
  \alpha_{n,f,i,s} \leftarrow \gamma_{n,f,i,s}^{(k)}.
  \label{eq:admm_occ}
\end{equation}
This update is performed either immediately after each slice optimization within the sweep (S-\gls{admm}, Subsection~\ref{subsec:sadmm}) or collectively once all slice subproblems have been solved (P-\gls{admm}, Subsection~\ref{subsec:padmm}).
The resulting placement $\tbeta^{(k)}$ is then evaluated using the normalized objective
\begin{equation}
  \mathrm{obj}^{(k)} =
    \omega_{\text{risk}}\frac{\mathrm{Risk}^{\mathrm{LB}}}{\bar{R}}\,
      \bigl(\tbeta^{(k)}\bigr)
    + \omega_{\text{mig}}\frac{\mathrm{Mig}}{\bar{M}}\,
      \bigl(\tbeta^{(k)}\bigr),
  \label{eq:admm_incumbent}
\end{equation}
best-performing placement encountered thus far is retained as the incumbent solution.
The iterative procedure terminates when the incumbent objective exhibits negligible improvement, 
$|\mathrm{obj}^{(k)} - \mathrm{obj}^{(k-1)}| < \delta_{\mathrm{obj}}$, 
or when the maximum number of sweeps, $K_{\max}$, is reached.

Algorithm~\ref{alg:admm} summarizes the resulting penalized best-response decomposition. The scheduling flag determines whether coordination follows the S-\gls{admm} update rule of Subsection~\ref{subsec:sadmm} or the P-\gls{admm} update rule of Subsection~\ref{subsec:padmm}.

\begin{algorithm}[t]
\caption{S-\gls{admm} / P-\gls{admm}}
\label{alg:admm}
\begin{algorithmic}[1]
\Input
  \glspl{sfc} $\{(n,\calF_n,\mathcal{U}_n)\}_{n=1}^{N}$;\;
   $\mathcal{G}^{(t)}$;\; $\bfbeta^{(t-1)}$;\;
  $(\bar{R},\bar{M},\bm{\pi},w_{n,n'}^{f},\rho,K_{\max},
   \delta_{\mathrm{obj}},
   \omega_{\text{risk}},\omega_{\text{mig}},\delta_{\mathrm{sub}})$;\; schedule $\in\{\textsc{S},\textsc{P}\}$
\Output Placement $\bfbeta^{(t)}$
\State $\mathbf{d}^{\mathrm{SP}}\!\leftarrow\!\textsc{Dijkstra}(\mathcal{G}^{(t)})$;\;
       $\alpha_{n,f,i,s}\!\leftarrow\!0$;\; $\gamma_{n,f,i,s}\!\leftarrow\!0$;\;
       $\ell^{(0)}_{n,s}\!\leftarrow\!0$;\;
       $\mathrm{obj}^{\star}\!\leftarrow\!\infty$;\; $\mathrm{obj}^{-}\!\leftarrow\!\infty$
\For{$k \leftarrow 1$ \textbf{to} $K_{\max}$}
  \If{schedule $=\textsc{S}$}
    \State $\ell_s\!\leftarrow\!0\;\forall\,s$
  \Else
    \State $L^{(k-1)}_{s}\!\leftarrow\!\sum_{n'=1}^{N}\ell^{(k-1)}_{n',s}\;\forall\,s$
  \EndIf
  \For{$n \leftarrow 1$ \textbf{to} $N$}\Comment{\textsc{P}: in parallel}
    \State $\phi^{(k)}_{n,f,i,s}\!\leftarrow\!
           \frac{\omega_{\text{risk}}}{\bar{R}}
           \!\sum_{n'\neq n}\!w_{n,n'}^{f}\,\alpha_{n',f,i,s}
           \;+\;\rho\!\left(\tfrac{1}{2}-\gamma_{n,f,i,s}^{(k-1)}\right)$
           \hfill\eqref{eq:admm_penalty}
    \State $\Lambda^{(k)}_{n,s}\!\leftarrow\!
           \begin{cases}\ell_s & (\textsc{S})\\ L^{(k-1)}_s-\ell^{(k-1)}_{n,s} & (\textsc{P})\end{cases}$
           \hfill\eqref{eq:residual_reserve}
    \State $[\bfbeta_n,\bfgamma_n]\!\leftarrow\!
           \textsc{SolveMILP}\!\bigl(
             \mathrm{sfc}_n,\,\mathcal{G}^{(t)},\,\bfbeta^{(t-1)},
             \{\mathcal{C}^{\mathrm{cpu}}_s\!-\!\Lambda^{(k)}_{n,s}\}_s,
             \bm{\phi}^{(k)}_n,\,\mathbf{d}^{\mathrm{SP}},
             \delta_{\mathrm{sub}}\bigr)$
           \hfill\eqref{eq:admm_xstep}
    \If{schedule $=\textsc{S}$}\Comment{Gauss--Seidel: refresh in place}
      \State $\ell_s\mathrel{+}=\text{CPU}_{n,s}$;\;
             $\alpha_{n,f,i,s}\!\leftarrow\!\gamma_{n,f,i,s}$
    \EndIf
  \EndFor
  \If{schedule $=\textsc{P}$}\Comment{Jacobi: refresh after all solves}
    \State $\ell^{(k)}_{n,s}\!\leftarrow\!\text{CPU}_{n,s}$;\;
           $\alpha_{n,f,i,s}\!\leftarrow\!\gamma_{n,f,i,s}$
  \EndIf
  \State $\tbeta\!\leftarrow\!\textsc{Merge}(\bfbeta_1,\ldots,\bfbeta_N)$
  \If{schedule $=\textsc{P}$}
    \State $\tbeta\!\leftarrow\!\textsc{RepairColocation}(\tbeta,I)$\hfill Alg.~\ref{alg:repair}
  \EndIf
  \State $\mathrm{obj}(\tbeta)\!\leftarrow\!
         \omega_{\text{risk}}\tfrac{\mathrm{Risk}^{\mathrm{LB}}}{\bar{R}}(\tbeta)
         + \omega_{\text{mig}}\tfrac{\mathrm{Mig}}{\bar{M}}(\tbeta)$
         \hfill\eqref{eq:admm_incumbent}
  \If{$\mathrm{obj}(\tbeta)<\mathrm{obj}^{\star}$}
    \State $\mathrm{obj}^{\star}\!\leftarrow\!\mathrm{obj}(\tbeta)$;\;
           $\bfbeta^{(t)}\!\leftarrow\!\tbeta$
  \EndIf
  \If{$|\mathrm{obj}(\tbeta)-\mathrm{obj}^{-}|<\delta_{\mathrm{obj}}$}\; \textbf{break}
  \EndIf
  \State $\mathrm{obj}^{-}\!\leftarrow\!\mathrm{obj}(\tbeta)$
\EndFor
\State \Return $\bfbeta^{(t)}$
\end{algorithmic}
\end{algorithm}

\subsection{S-\gls{admm}: Sequential Gauss--Seidel Coordination}
\label{subsec:sadmm}

In S-\gls{admm}, the $N$ per-slice $\mathbf{x}$-steps are solved sequentially according to a fixed ordering,
$n = 1, 2, \ldots, N$. Unlike a parallel update scheme, the decisions of each slice are made immediately available to all subsequent slices within the same sweep. As a result, both resource utilization and co-location occupancy are progressively updated as the algorithm advances through the slice sequence. Specifically, after the optimization of slice $n$, the corresponding CPU allocations are committed and incorporated into the residual-capacity calculation. Consequently, constraint~\eqref{eq:c3rem} for a later slice accounts for the resources already reserved by previously optimized slices and is evaluated using the remaining satellite capacity $\mathcal{C}_s^{\mathrm{cpu}}
-
\sum_{\substack{n' \in \calN \\ n' < n}}
\mathrm{CPU}_{n',s}^{(k)}$.
At the same time, the occupancy estimate is updated in place. For every slice that has already been solved,
$\alpha_{n',f,i,s}
\leftarrow
\gamma_{n',f,i,s}^{(k)},
\qquad n' < n$,
such that the co-location penalty in~\eqref{eq:admm_penalty} reflects the most recent placement commitments. Therefore, when solving slice $n$, the linear penalty coefficients already capture the instances claimed by earlier slices during the current sweep. This Gauss--Seidel coordination mechanism enables each slice to react to the latest network state rather than to stale occupancy information. By exposing capacity consumption and VNF-placement decisions immediately, later slices are naturally guided away from congested satellites and heavily shared VNF instances. The resulting reduction in effective cross-slice coupling typically improves solution stability and accelerates convergence, while avoiding the synchronization overhead associated with parallel coordination schemes.
The S-\gls{admm} scheduling strategy implemented in Algorithm~\ref{alg:admm} formalizes this sequential update process.

\subsection{P-\gls{admm}: Parallel (Jacobi) Update with Collision Repair}
\label{subsec:padmm}

P-\gls{admm} replaces the sequential Gauss--Seidel schedule of S-\gls{admm} with a fully parallel Jacobi update. At iteration $k$, all $N$ per-slice {x}-steps are solved concurrently using only the state available from iteration $(k-1)$. Consequently, each slice optimizes its placement against a frozen view of both resource utilization and VNF occupancy. Specifically, slice $n$ evaluates the residual CPU capacity of satellite $s$ as $\mathcal{C}_s^{\mathrm{cpu}} - \Bigl( L_s^{(k-1)}
- \ell_{n,s}^{(k-1)} \Bigr)$, where $L_s^{(k-1)}
= \sum_{n'=1}^{N}
\ell_{n',s}^{(k-1)}$ denotes the aggregate CPU load committed on satellite $s$ during the previous sweep. Likewise, the occupancy estimate $\bm{\alpha}^{(k-1)}$ remains fixed throughout the iteration. As a result, every per-slice subproblem depends exclusively on iteration-$(k-1)$ information and can therefore be solved independently. This removes all within-sweep dependencies and enables the $N$ optimization tasks to be dispatched simultaneously across $N$ parallel workers.

The main drawback of this parallelism is the loss of the immediate coordination mechanism available in S-\gls{admm}. Because occupancy information is not updated during the sweep, multiple slices may independently select the same VNF instance $(f,i,s)$ while observing identical occupancy estimates. Such conflicts are not visible to the stale penalty term and may therefore persist in the merged placement. To restore inter-slice coordination, P-\gls{admm} appends a deterministic \emph{collision-repair} stage after the parallel solve and before objective evaluation. The overall procedure is summarized by the P-\gls{admm} schedule of Algorithm~\ref{alg:admm} and the repair routine of Algorithm~\ref{alg:repair}.

\begin{algorithm}[t]
\caption{\textsc{RepairColocation}}
\label{alg:repair}
\begin{algorithmic}[1]
\Input merged placement $\tbeta$;\; instances per satellite $I$
\Output repaired placement $\hbeta$
\For{each (function type $f$, satellite $s$) activated in $\tbeta$}
  \State $\mathcal{N}_{f,s}\!\leftarrow\!\{\,n : \text{slice } n \text{ places } f \text{ on } s\,\}$
  \For{rank $r$, slice $n$ \textbf{in} \textsc{enumerate}$\bigl(\textsc{sort}(\mathcal{N}_{f,s})\bigr)$}
    \State reassign every $f$-VNF of slice $n$ on $s$ to instance $i\!\leftarrow\! r \bmod I$
  \EndFor
\EndFor
\State \Return $\hbeta$
\end{algorithmic}
\end{algorithm}

The repair procedure operates independently for each function--satellite pair $(f,s)$. For every such group, it first identifies the set $\mathcal{N}_{f,s}$ of slices that place function $f$ on satellite $s$. Then, the slices are ordered deterministically and assigned instance identifiers according to their rank, with the slice of rank $r$ mapped to instance $r \bmod I$. This reassignment alters only the instance label and leaves the selected satellite unchanged. Consequently, satellite visibility constraints, routing decisions, E2E delay, and per-satellite resource consumption remain unaffected.

By redistributing slices across the $I$ available replicas, the repair eliminates unnecessary co-location whenever sufficient instances exist. In particular, if
$|\mathcal{N}_{f,s}| \le I$, every contending slice can be assigned a distinct instance, and all collisions are removed. When $|\mathcal{N}_{f,s}| > I$, the available instances are insufficient to provide complete isolation. In this oversubscribed regime, the rank-modulo assignment distributes slices as evenly as possible across the $I$ replicas, thereby minimizing concentration while leaving an unavoidable residual co-location. The corresponding residual risk is captured directly by the P-\gls{admm} objective and evaluated in Subsection~\ref{sec:results:scalability}.

The computational overhead of the repair stage is modest. Since each $(f,s)$ group is processed independently, the total complexity is $O\!\left(
\sum_{f,s} |\mathcal{N}_{f,s}|
\log |\mathcal{N}_{f,s}|
\right)$, arising from the sorting operation within each group. Moreover, the repair itself is parallel and can be executed concurrently across different function--satellite groups. After repair, the placement is evaluated using the objective in~\eqref{eq:admm_incumbent}, the incumbent solution is updated if improved, and the algorithm proceeds to the next sweep. The same objective-stagnation criterion and maximum-iteration limit used by S-\gls{admm} are adopted unchanged.
Finally, note that the deterministic ordering used by Algorithm~\ref{alg:repair} is selected solely for reproducibility. For a fixed function type and satellite, all $I$ instances are homogeneous replicas with identical visibility, capacity, and operating cost. Consequently, the sorted assignment introduces no systematic advantage for any slice and affects only the identity of the slices involved in the rare case of residual collisions when $|\mathcal{N}_{f,s}| > I$. Alternative tie-breaking strategies, such as randomized or load-balancing assignments, would merely redistribute the remaining collisions among different slice pairs without changing their overall number.

\subsection{Convergence Analysis}
\label{subsec:convergence}
The proposed decomposition is not a classical consensus-\gls{admm} algorithm because it does not maintain dual variables or perform a dual-ascent update. Instead, coordination is achieved through a combination of occupancy-based pricing and proximal regularization embedded in the per-slice subproblems. Consequently, the relevant convergence notion is not primal--dual convergence, but rather convergence to a stable best-response configuration of the penalized decomposition.

\begin{proposition}[Incumbent monotonicity and best-response fixed points]
\label{prop:convergence}
Let $\mathrm{obj}^{\star}_k$ be the incumbent objective after sweep~$k$ of Algorithm~\ref{alg:admm}. Then, the following properties hold:
\begin{enumerate}
  \item[(i)] The incumbent sequence $\{\mathrm{obj}^{\star}_k\}_{k\ge0}$ is non-increasing and bounded below by 0. Consequently, it converges, and the algorithm terminates after a finite number of sweeps, no later than $K_{\max}$.
  \item[(ii)] If the occupancy estimate remains unchanged between two consecutive sweeps, $\bm{\alpha}^{(k)} = \bm{\alpha}^{(k-1)}$, then each per-slice activation $\bfgamma_n^{(k)}$ is an optimal solution of its penalized subproblem~\eqref{eq:admm_xstep} under the fixed occupancy generated by the remaining slices. Hence, the collection of per-slice solutions constitutes a coordinate-wise best-response fixed point of the penalized decomposition.
  
\end{enumerate}
\end{proposition}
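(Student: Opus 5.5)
For part~(i), the plan is to argue directly from the incumbent update in Algorithm~\ref{alg:admm}. By construction, $\mathrm{obj}^{\star}$ is overwritten only when the new evaluation satisfies $\mathrm{obj}(\tbeta)<\mathrm{obj}^{\star}$. Hence $\mathrm{obj}^{\star}_k=\min\{\mathrm{obj}^{\star}_{k-1},\mathrm{obj}(\tbeta^{(k)})\}\le \mathrm{obj}^{\star}_{k-1}$, which gives monotonicity at once. The convention $\mathrm{obj}^{\star}_0=\infty$ is harmless: the first sweep always produces a finite value provided each per-slice subproblem returns a feasible placement, and I will take that as a standing assumption.

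For the lower bound, the objective~\eqref{eq:admm_incumbent} is a nonnegative combination of two terms. The first is $\Risk^{\text{LB}}$, a sum of products $w_{n,n'}^{f}\,y_{n,n',l,l'}^{i,s}$ with $w\ge0$ by Definition~\ref{def:risk_weight} and $y\in\{0,1\}$. The second is $\text{Mig}$, a sum of $D^{\text{mig}}\mu$ with $D^{\text{mig}}\ge0$ and $\mu\ge0$ by~\eqref{C4a}--\eqref{C4c}. With positive weights $\omega$ and normalizers $\bar R,\bar M$, the objective is therefore $\ge0$.

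Convergence then follows from the monotone convergence theorem. There is also a sharper route: the feasible set of binary placements is finite, so the incumbent takes finitely many values and eventually becomes constant. Termination is immediate because the outer loop runs at most $K_{\max}$ times, and the stagnation test can only stop it earlier.

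For part~(ii), I will fix sweep~$k$ and observe what the hypothesis $\bm\alpha^{(k)}=\bm\alpha^{(k-1)}$ gives. By the update~\eqref{eq:admm_occ}, $\bm\alpha^{(k)}$ is precisely the vector of realized activations $\bfgamma^{(k)}$. The slice-$n$ subproblem~\eqref{eq:admm_xstep} at sweep~$k$ has a linear penalty $\phi^{(k)}_n$. Its coupling part depends only on the other slices' occupancy $\{\alpha_{n'}\}_{n'\ne n}$, and under the hypothesis this equals the other slices' current activations $\{\gamma^{(k)}_{n'}\}$.

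\textbf{P-ADMM.} Each solve used $\bm\alpha^{(k-1)}$. Since this coincides with $\bfgamma^{(k)}_{-n}$, the returned $\bfgamma^{(k)}_n$ was optimal against exactly the final configuration of the other slices.

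\textbf{S-ADMM.} Slice $n$ saw $\alpha_{n'}=\gamma^{(k)}_{n'}$ for $n'<n$ and $\alpha_{n'}=\gamma^{(k-1)}_{n'}$ for $n'>n$. The hypothesis identifies the latter with $\gamma^{(k)}_{n'}$, since in-place updates in the previous sweep left $\alpha_{n'}=\gamma^{(k-1)}_{n'}$. So in both schedules each slice faces the fixed occupancy generated by the others at the final state.

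Optimality of each slice's returned solution is then simply the statement that \textsc{SolveMILP} returns an optimum of~\eqref{eq:admm_xstep} (up to the tolerance $\delta_{\mathrm{sub}}$). No slice can lower its penalized objective by a unilateral change, which is the definition of a coordinate-wise best-response (Nash-type) fixed point.

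The main obstacle is part~(ii), for two reasons. First, the penalty also contains the proximal anchor $\rho(\tfrac12-\gamma^{(k-1)}_n)$ and the residual capacities $\Lambda^{(k)}_{n,s}$, and both depend on sweep $k-1$ rather than on the fixed point itself. I will handle this by stating optimality relative to the subproblem as posed at sweep~$k$, i.e.\ with the anchor and reserve held at their sweep-$k$ values. I will also note that if additionally $\bfgamma^{(k)}_n=\bfgamma^{(k-1)}_n$, which is implied by $\bm\alpha^{(k)}=\bm\alpha^{(k-1)}$ since $\alpha_n=\gamma_n$, then the anchor is self-consistent. Second, for P-ADMM the collision repair alters instance labels after the solve, so the fixed-point claim must be read for the pre-repair per-slice solutions $\bfgamma^{(k)}_n$. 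I will state this explicitly.
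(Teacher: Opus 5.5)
Your proposal is correct and follows the paper's own argument. Part~(i) rests on the strict-improvement incumbent update, the nonnegativity of the normalized risk and migration terms, monotone convergence, and the $K_{\max}$ cap. Part~(ii) rests on three points: under $\alpha^{(k)}=\alpha^{(k-1)}$, the occupancy coefficients and (via $\alpha_n=\gamma_n$) the proximal anchor coincide with the realized activations; the x-step is optimal up to $\delta_{\mathrm{sub}}$; and P-ADMM's collision repair is external to the $\alpha$ update.

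Your version is more careful than the paper's ``same problem as in the preceding sweep'' phrasing in two places. You explicitly check the mixed within-sweep occupancy seen by each slice under S-ADMM. You also note that the reserve $\Lambda^{(k)}_{n,s}$ must be held at its sweep-$k$ value. Neither changes the approach.
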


\begin{proof}
(i) The incumbent solution is updated only when a placement with a strictly lower objective value is found(Algorithm~\ref{alg:admm}). Hence, $\mathrm{obj}^{\star}_k \le \mathrm{obj}^{\star}_{k-1}$ and the incumbent sequence is monotone non-increasing. The normalized objective defined in~\eqref{eq:admm_incumbent} is a weighted sum of the normalized risk and migration costs, both of which are nonnegative by construction. Therefore,
$\mathrm{obj}^{\star}_k \ge 0, \forall k$. Since every monotone non-increasing sequence that is bounded below converges, $\mathrm{obj}^{\star}_k$ converges to a finite limit. The stopping rule based on objective stagnation together with the explicit iteration limit $K_{\max}$ guarantees finite termination.

(ii) Assume that
$\bm{\alpha}^{(k)}
=\bm{\alpha}^{(k-1)}$. In this case, the occupancy-dependent coefficients appearing in the penalized objective~\eqref{eq:admm_penalty} remain unchanged between sweeps. Since the proximal anchor is also fixed at the previous activation state, each slice solves the same penalized optimization problem as in the preceding sweep.
By construction, the {x}-step~\eqref{eq:admm_xstep} computes a solution that is optimal (up to the subproblem tolerance $\delta_{\mathrm{sub}}$) for slice $n$ under these fixed coefficients. Hence, given the occupancy induced by the other slices, no unilateral modification of $\bfgamma_n^{(k)}$ can further reduce slice $n$'s penalized objective. This property holds simultaneously for all slices, implying that the resulting placement is a coordinate-wise best-response equilibrium of the penalized decomposition.
For P-\gls{admm}, the collision-repair stage does not alter this argument. The occupancy estimate $\bm{\alpha}$ is updated directly from the per-slice activation decisions $\bfgamma_n^{(k)}$ produced by the parallel {x}-steps and not from the repaired placement. Consequently, the repair procedure is external to the iterative coordination mechanism and serves only to construct the placement evaluated by the incumbent objective. Statement~(ii) thus applies identically to both the S-\gls{admm} and P-\gls{admm} schedules.
Moreover, because the repair mapping of Algorithm~\ref{alg:repair} is deterministic, a fixed occupancy pattern generates the same merged placement at every subsequent sweep and thus the same repaired placement. Hence, once a best-response fixed point is reached, both the optimization variables and the final deployed placement remain unchanged across iterations.
\end{proof}

\subsection{Problem Size and Computational Cost}
\label{subsec:complexity}

The main motivation for the proposed decomposition is to remove the dominant cross-slice coupling that makes the monolithic formulation difficult to solve. In the original joint \gls{milp}, the largest combinatorial component arises from the co-location variables $y$, whose cardinality scales as
$O\!\left(N^{2}L_{\max}^{2}IS\right)$, since they are defined for pairs of slices, service-function positions, instances, and satellites. These variables, together with the auxiliary coupling variables $z$, introduce the quadratic dependence on the number of slices and significantly increase the branch-and-bound search space. In the proposed decomposition, both $y$ and $z$ are removed from the per-slice optimization problems. Instead, their effect is captured through the occupancy-based linear penalty $\phi^{(k)}$ applied to the activation variables $\gamma$ in~\eqref{eq:admm_penalty}. Consequently, the decision space of each per-slice {x}-step contains only $O(\bar{U}L_{\max}IS)$ assignment variables $\beta_n$, $O(|\calF_n|IS)$ activation variables $\gamma_n$, and $O(\bar{U}L_{\max})$ migration variables $\mu_n$. Also, the delay linearization of \eqref{C5} introduces continuous auxiliary variables $v^{s,s'}_{n,u,l}\in[0,1]$,
whose number scales as
$O\!\left( \bar{U}L_{\max}|\mathcal{E}^{(t)}| \right)$,
where $|\mathcal{E}^{(t)}|$ denotes the number of reachable satellite links during the orchestration epoch. Importantly, the size of each subproblem is independent of the total number of slices $N$ except through the externally supplied penalty coefficients.

Let $T_{\mathrm{sub}}$ be the runtime required to solve a single per-slice \gls{milp} to the target optimality gap $\delta_{\mathrm{sub}}$. Since each subproblem remains an integer program, its worst-case complexity is exponential in the number of binary variables. Hence, we treat $T_{\mathrm{sub}}$ as a solver-dependent primitive and evaluate it empirically in Subsection~\ref{sec:results:scalability}. Each sweep of Algorithm~\ref{alg:admm} performs $N$ such subproblem solves, followed by a merge operation whose complexity is $O\!\left(
N\bar{U}L_{\max}IS
\right)$. The merged placement is then evaluated using the normalized objective~\eqref{eq:admm_incumbent}. The dominant component of this evaluation is the pairwise co-location risk calculation, which requires
$O\!\left(
N^{2}|\calF|IS
\right)$ operations. For P-\gls{admm}, an additional collision-repair stage is executed. From the analysis of Algorithm~\ref{alg:repair}, its complexity is
$O\!\left(
\sum_{f,s}
|\mathcal{N}_{f,s}|
\log |\mathcal{N}_{f,s}|
\right)$, which is upper bounded by $O\!\left(
N|\calF|S\log N \right)$.
The overall computational work up to the maximum number of sweeps is therefore $O\!\left(
K_{\max} \left[
N\,T_{\mathrm{sub}} +
N^{2}|\calF|IS
\right]\right)$.

Although the decomposition does not eliminate the quadratic dependence on $N$ entirely, it removes this dependence from the optimization model itself. The $O(N^{2})$ interaction term survives only in the closed-form incumbent evaluation and no longer appears as binary decision variables within a \gls{milp}. In practice, this scoring cost is negligible compared with the cumulative subproblem solution time $N\,T_{\mathrm{sub}}$ across all tested scenarios.

The distinction between the two schedules lies primarily in their exploitable parallelism. In S-\gls{admm}, the Gauss--Seidel update order introduces a dependency between consecutive slices, forcing the $N$ subproblems to be solved sequentially. The resulting wall-clock complexity scales as $O\!\left(
K_{\max}N\,T_{\mathrm{sub}}
\right)$. In contrast, the Jacobi structure of P-\gls{admm} makes every subproblem depend only on iteration-$(k-1)$ information. Assuming the availability of $N$ workers, all per-slice solves can be executed concurrently, reducing the dominant wall-clock cost to
$O\!\left(K_{\max}T_{\mathrm{sub}}\right)$, excluding the merge, repair, and scoring phases. These latter stages remain at least linear in $N$ and thus limit the achievable speedup below the ideal $N$-fold factor predicted by perfect parallel scaling. The decomposition does not lower the worst-case complexity as each subproblem is still NP-hard. However, it replaces one large coupled MILP with $N$ small independent ones whose empirical solve times stay within the per-epoch budget.

\section{Simulation Setup and Compared Methods}
\label{sec:setup}

\subsection{Simulation Setup}
Using the MATLAB satellite communication toolbox (SatCom), we simulate a Walker-Delta \gls{leo} constellation \cite{Pachler2024} comprising $S = 80$ satellites distributed across $P = 8$ orbital planes at an altitude of $h_{\mathrm{orb}} = 550$\,km with an inclination of $53^{\circ}$.
Orbital propagation follows Keplerian dynamics with epoch snapshots generated every $\Delta t \in \{60,\,120,\,180,\,360\}$\,sec. Unless otherwise stated, $\Delta t = 120$\,sec is used as the reference epoch duration, a granularity at which the \gls{isl} topology is approximately stationary within a time window. Yet re-optimization remains frequent enough to track orbital motion~\cite{Wang2022}.
Each satellite maintains \glspl{isl} to its $k = 4$ nearest neighbors within a maximum range of $3{,}200$\,km~\cite{Pachler2026}.
The maximum concurrent flows per \gls{isl} link is $F_{\max} = 10$~\cite{Xu2024}.
The minimum elevation angle set to $\theta_{\min} = 10^{\circ}$, a conservative operational minimum consistent with \gls{3gpp} non-terrestrial-network guidance~\cite{3gpp38811}.

The security workload is configured to span the chain lengths and sensitivity levels representative of multi-tenant security slicing.
Each network slice $n$ carries a  \gls{sfc} consisting of $L_n \in \{3,4,5\}$ security \gls{vnf}s drawn from the function types $\mathcal{F} = \{\mathrm{\gls{fw}},\,\mathrm{\gls{ids}},\,\mathrm{ENC},\,\mathrm{\gls{tm}},\,\mathrm{\gls{siem}}\}$, a range consistent with practical \gls{sfc}s~\cite{Wang2025}.
Each function type $f$ has security sensitivity as follows: 
$R_{\mathrm{\gls{fw}}}=0.6$, 
$R_{\mathrm{\gls{ids}}}=0.8$, 
$R_{\mathrm{ENC}}=0.9$, 
$R_{\mathrm{\gls{tm}}}=0.4$, and 
$R_{\mathrm{\gls{siem}}}=0.6$. 
These values follow the ordering established in Definition~\ref{def:risk_weight}, assigning the highest sensitivity to functions that handle cryptographic material or attack-detection state (ENC, \gls{ids}) and the lowest to passive monitoring (\gls{tm}). Consequently, co-locating the most security-critical functions incurs the largest risk weight.
Migration disruption costs are $D_f^{\text{mig}} \in \{0.8,1.0,1.5,1.7,2.0\}$ for \gls{tm}, \gls{fw}, \gls{ids}, \gls{siem}, and ENC, respectively. These values are ordered to reflect the run-time state each function must transfer on relocation: stateless or lightly stateful functions (\gls{tm}, \gls{fw}) are the cheapest to migrate, whereas functions carrying detection or session state (\gls{ids}, \gls{siem}, ENC) incur progressively higher disruption.
Each slice has criticality $C_n$ drawn uniformly from $[1.0,\,3.0]$ and per-user E2E latency budget $\bar{T}_{n,u}$ drawn from $[100,\,150]$\,ms, spanning the Quality-of-service (QoS) targets of representative \gls{leo} verticals as set out in \gls{3gpp}/ITU slicing requirements~\cite{ETSI_MEC}. The criticality range spans the low-, medium-, and high-classification tiers of the data-sensitivity scale underlying $C_n$ in Definition~\ref{def:risk_weight}, such that slices contribute to co-location risk in proportion to the confidentiality of the traffic they carry. The isolation policy coefficient is $\Phi_{n,n'} \in [0,\,0.3]$, spanning from fully enforced inter-slice isolation ($\Phi_{n,n'}=0$) to a weak isolation policy, thereby exercising both the zero-trust and the partially-isolated operating regimes of the risk model. 
Security \gls{vnf} activation CPU costs are drawn from type-specific ranges $[0.3,\,2.5]$\,GFLOPS and per-user incremental costs from $[0.05,\,0.6]$\,GFLOPS, with satellite CPU capacities drawn from $[40,\,120]$\,GFLOPS.

To ensure that the reported trends reflect the decomposition rather than a single favourable draw, every operating point is averaged over repeated trials with confidence intervals.
Unless otherwise stated, experiments use $N = 10$ slices with $5$ users per slice, $I = 3$ candidate instances per function per satellite, $2$ independent random trials of $20$ evaluation epochs each (i.e, $40$ trial-epoch samples, over which means and $95\%$ confidence intervals are computed), and objective weights $\omega_{\text{risk}} = 0.75$, $\omega_{\text{mig}} = 0.25$. This setup is referred to as the \textit{base configuration} throughout this section.
Here, a trial denotes one random realization of the workload, such as SFCs, risk parameters, and per-\gls{vnf} CPU costs over the same constellation.

\subsection{Compared Methods}
\label{sec:setup:methods}

We compare the proposed methods against baselines that span the design space from solver-free heuristics to security-unaware
adaptive MILPs. The list of methods is:

\begin{itemize}
  \item \textbf{\gls{ru}:} the joint \gls{milp} with the security penalty disabled ($\omega_{\text{risk}} = 0$), resolved every epoch. It is the strongest non-secure baseline, whose gap to the proposed methods quantifies the security benefit of explicit co-location optimization.

  \item  \textbf{Greedy Heuristic (GDY):} a solver-free breadth-first search (BFS) with nearest-feasible-satellite assignment. 
  The entire service chain is reconstructed independently at each epoch without considering prior placements, migration costs, or co-location risk. GDY represents the lowest-complexity approach and provides a practical lower-bound benchmark for solution quality.

  \item  \textbf{Monolithic Security-Aware \gls{milp} (MYO):} the joint single-epoch security \gls{milp}~\eqref{eq:milp_obj} is solved centrally over all $N$ slices with a $\delta_{\text{MIP}} = 0.5\%$ gap. It directly optimizes the security-aware objective without decomposition. However, its computational cost increases rapidly with the number of slices due to the quadratic growth of the inter-slice coupling variables.

  \item \textbf{\textbf{S-ADMM/P-ADMM}:} the proposed penalized best-response decomposition with the sequential Gauss--Seidel schedule (S-\gls{admm}) and the parallel Jacobi schedule with collision repair (P-\gls{admm}), respectively. Unless otherwise stated, both variants use $\rho = 3.0$, $K_{\max} = 15$,
    $\delta_{\mathrm{obj}} = 10^{-4}$, a per-subproblem optimality gap of $\delta_{\mathrm{sub}} = 0.5\%$, and a $10$~sec time limit for each per-slice \gls{milp}.
\end{itemize}

\section{Results and Analysis}
\label{sec:results}

\subsection{Main Performance Comparison}
\label{sec:results:main}

\begin{figure*}[t]
  \centering
  \begin{subfigure}[t]{0.32\textwidth}
    \centering
    \includegraphics[width=\linewidth]{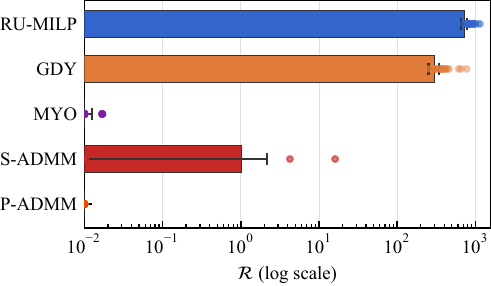}
    \caption{Average co-location risk $\mathcal{R}$}
    \label{fig:main_comparison_risk}
  \end{subfigure}\hfill
  \begin{subfigure}[t]{0.32\textwidth}
    \centering
    \includegraphics[width=\linewidth]{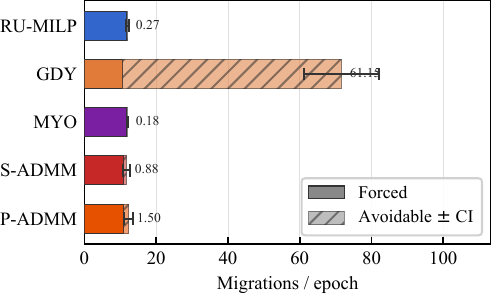}
    \caption{Average migrations per epoch}
    \label{fig:main_comparison_mig}
  \end{subfigure}\hfill
  \begin{subfigure}[t]{0.32\textwidth}
    \centering
    \includegraphics[width=\linewidth]{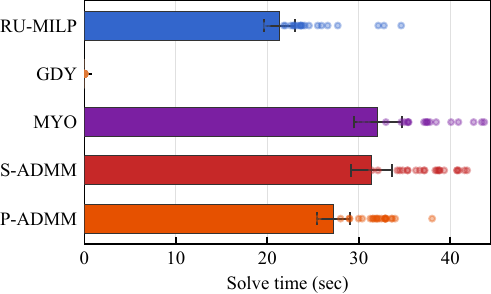}
    \caption{Average solve time}
    \label{fig:main_comparison_time}
  \end{subfigure}
  \caption{Average performance under the base configuration.}
  \label{fig:main_comparison}
\end{figure*}

Fig.~\ref{fig:main_comparison} presents the average results, in terms of co-location risk, number of migrations per epoch, and solve time (in sec), of all five methods under the base configuration when $N=10$. This is the largest slice count at which MYO remains solvable and thus the only setting admitting a direct five-method comparison across all performance dimensions.

\textit{Security co-location risk}: 
Accorinding to Fig.~\ref{fig:main_comparison_risk}, MYO and P-\gls{admm} eliminate co-location risk with $\mathcal{R} = 0.005$ for MYO and $\mathcal{R} = 0$ for P-\gls{admm}, while S-\gls{admm} reduces risk to $\mathcal{R} = 1.02$. These security-aware methods stay three orders of magnitude below \gls{ru} ($\mathcal{R} = 720.2$), confirming that ignoring the security term produces placements that regularly co-locate sensitive security \gls{vnf} pairs as the workload grows. Finally, GDY achieves $\mathcal{R} = 297.9$, which is lower than \gls{ru}'s result since the geographic spread of \glspl{sfc} provides sub-optimal incidental isolation.

\textit{Avoidable and forced VNF migrations}:
Based on Fig.~\ref{fig:main_comparison_mig}, where the avoidable portion of each bar is annotated with its numeric value so the small counts remain legible, MYO achieves an average of $0.18$ avoidable migrations per epoch (hatched bars), while S-\gls{admm} and P-\gls{admm} incur $0.88$ and $1.50$, respectively. The per-slice decomposition of the proposed methods relaxes global continuity coordination, resulting in a slight increase in controllable churn relative to MYO. P-\gls{admm}'s collision repair trades a small additional churn for tighter co-location separation. In contrast, \gls{ru} retains prior assignments almost entirely, achieving near-zero avoidable migrations ($0.27$ per epoch).  At the opposite, GDY produces $61.2$ avoidable migrations per epoch due to its epoch-by-epoch stateless reassignment. The forced migration rate is $\approx 10.5$--$11.8$ per epoch for all \gls{milp}-based methods, driven by topology-force handovers when a satellite drops below the elevation angle threshold.

\textit{Solve time}:
In Fig.~\ref{fig:main_comparison_time}, the decomposed methods return a feasible placement on all trial-epoch samples, except for MYO, which fails on $7$ of them in the base configuration.
Specifically, MYO requires an average of $32.1$,s per epoch across the $33$ epochs for which it successfully converges; however, it exceeds the imposed solver time limit of $300$,s in the remaining $7$ epochs. P-\gls{admm} achieves an average runtime of $27.2$,s per epoch, making it approximately $1.15\times$ faster than S-\gls{admm} while maintaining greater reliability than MYO. In contrast, GDY and RU-\gls{milp} exhibit significantly lower computational overhead. GDY completes execution almost instantaneously ($\approx 0.01$,s per epoch) due to its single-pass BFS-based reassignment strategy, whereas RU-\gls{milp} is the most computationally efficient \gls{milp} formulation because it eliminates the $O(N^2)$ co-location variables. Nevertheless, these computational gains come at the expense of substantially higher co-location risk, as illustrated in Fig.~\ref{fig:main_comparison_risk}. Consequently, the proposed decomposition-based approach offers the most computationally efficient security-aware solution, while remaining considerably more tractable than the joint \gls{milp} formulation at $N=10$.

\begin{figure}[t]
  \centering
  \includegraphics[width=0.85\columnwidth]{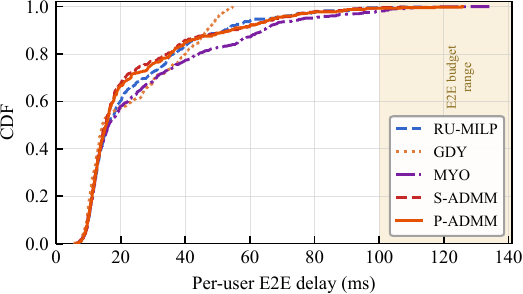}
  \caption{CDF of per-user \gls{e2e} delay.}
  \label{fig:delay_cdf}
\end{figure}

\textit{Delay distribution}: 
Fig.~\ref{fig:delay_cdf} plots the cumulative distribution functions (CDFs) of the E2E delay of the compared methods. 
Accordingly, we see that every user's E2E delay remains at or below $135$\,ms, which is within the defined delay thresholds $[100,150]$\,ms. Nevertheless, almost all methods (except MYO) ensure that 100\% of all users' E2E delay is below 100 ms. 

\textit{Epoch-by-epoch variability}: 
Fig.~\ref{fig:temporal} presents the per-epoch behaviours of risk and solve time over the $20$ evaluation epochs. 
In the risk panel (Fig.~\ref{fig:temporal_risk}), each plotted value is the average over the two random trials at that epoch index; the solve-time results are instead shown in a separate panel per trial (Figs.~\ref{fig:temporal_time1} and~\ref{fig:temporal_time2}), so that the epochs where MYO times out appear as breaks rather than being masked by trial averaging. Fig.~\ref{fig:temporal_risk} shows that MYO, S-\gls{admm}, and P-\gls{admm} consistently maintain low risk across all epochs, while GDY and \gls{ru} exhibit higher and more variable risk, due to the absence of security-aware objectives.
\begin{figure*}[t]
  \centering
  \begin{subfigure}[t]{0.32\textwidth}
    \centering
    \includegraphics[width=\linewidth]{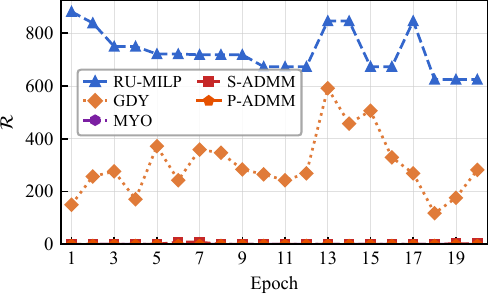}
    \caption{Per-epoch security risk ($\mathcal{R}$)}
    \label{fig:temporal_risk}
  \end{subfigure}\hfill
  \begin{subfigure}[t]{0.32\textwidth}
    \centering
    \includegraphics[width=\linewidth]{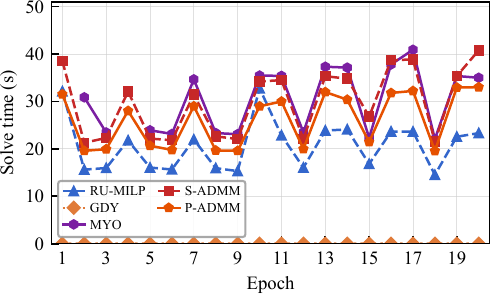}
    \caption{Per-epoch solve time (trial 1)}
    \label{fig:temporal_time1}
  \end{subfigure}\hfill
  \begin{subfigure}[t]{0.32\textwidth}
    \centering
    \includegraphics[width=\linewidth]{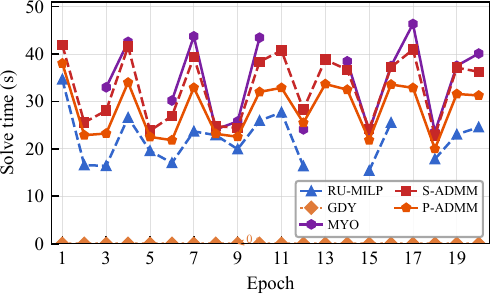}
    \caption{Per-epoch solve time (trial 2)}
    \label{fig:temporal_time2}
  \end{subfigure}
  \caption{Per-epoch temporal dynamics: (a) security risk averaged over the two trials, and (b, c) solve time for each trial.}
  \label{fig:temporal}
\end{figure*}

In Figs.~\ref{fig:temporal_time1} and~\ref{fig:temporal_time2},
MYO returns a feasible solution on only $33$ of $40$ trial-epoch samples and times out on the remaining $7$ (two in trial~1 and five in trial~2). Since each trial is drawn in its own panel, these timed-out epochs appear as breaks in MYO's solid line, where no feasible solve time exists.
For the other MILP methods, the solve times are in the range $[15,40]$ sec per epoch, with RU-MILP being the fastest solution. Nevertheless, the proposed ADMM methods achieve faster times than MYO, while ensuring that all epochs are resolved.

\subsection{Scalability Analysis}
\label{sec:results:scalability}

In Fig.~\ref{fig:scalability} , we present the averaged results, in terms of co-location risk, migrations per epoch, and solve time, for the considered methods, as functions of the number of slices $N \in \{5,7,10,12,15,20,25\}$.  
\begin{figure*}[t]
  \centering
  \begin{subfigure}[t]{0.32\textwidth}
    \centering
    \includegraphics[width=\linewidth]{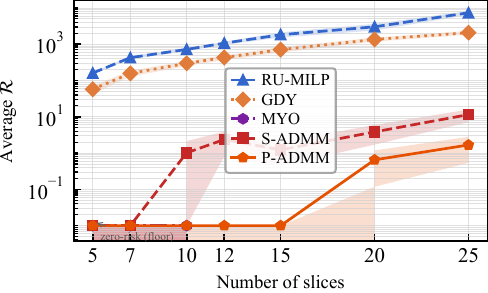}
    \caption{Log scaled average security risk ($\mathcal{R}$)}
    \label{fig:scalability_risk}
  \end{subfigure}\hfill
  \begin{subfigure}[t]{0.32\textwidth}
    \centering
    \includegraphics[width=\linewidth]{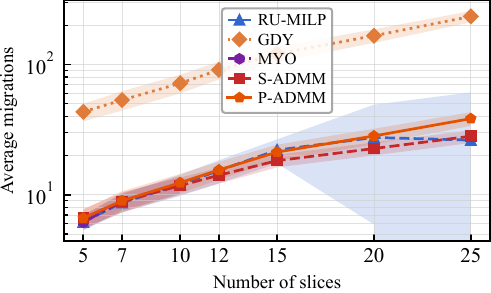}
    \caption{Log scaled average migrations per epoch}
    \label{fig:scalability_mig}
  \end{subfigure}\hfill
  \begin{subfigure}[t]{0.32\textwidth}
    \centering
    \includegraphics[width=\linewidth]{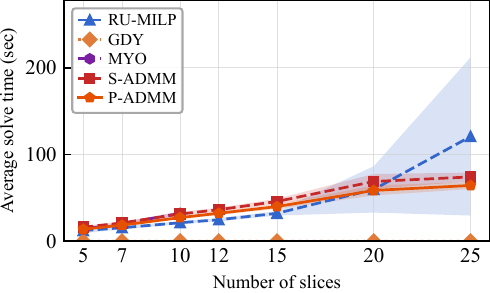}
    \caption{Average solve time per epoch (s)}
    \label{fig:scalability_time}
  \end{subfigure}
  \caption{Scalability versus number of slices.}
  \label{fig:scalability}
\end{figure*}

\textit{Security risk scaling}: According to Fig.~\ref{fig:scalability_risk}, MYO reaches near-zero risk wherever feasible ($N \leq 10$), but collapses (no solution) for $N>10$.
In contrast, P-\gls{admm} holds risk at exactly zero for $N \leq 15$, with only a small residual ($0.6$ at $N=20$ and $1.68$ at $N = 25$), as its deterministic repair separates every co-located pair that the $I$ instances can handle. S-\gls{admm}'s residual risk grows steadily ($\mathcal{R} \in [1.02, 11.50]$ for $N \in [10,25]$) as the Gauss--Seidel updates fail under heavier coupling. 
Both \gls{admm} methods remain orders of magnitude below the non-secure baselines, whose risk increases linearly (e.g., GDY reaches $2084.6$ and \gls{ru} $7333$ for $N = 25$). \gls{ru}'s feasibility also collapses for high $N$ values, e.g., only $4$ epochs were solved for $N = 25$.
Unlike MYO, \gls{ru} omits the $O(N^2)$ co-location variables. Hence, its infeasibility stems from the delay-constrained assignment, rather than co-location. Since unsolved epochs are the most contended, this feasible-only average understates \gls{ru}'s true risk. In all panels of Fig.~\ref{fig:scalability}, the solid line is the mean over the trial-epoch samples and the shaded band is the $95\%$ confidence interval of that mean.

\textit{Migration scaling}:
Based on Fig. \ref{fig:scalability_mig}, 
the total number of migrations per epoch (the quantity plotted) scales linearly with $N$ for MILP-based methods, consistent with a fixed topology-driven handover rate per epoch. S-\gls{admm} keeps total migrations low and close to MYO, whereas P-\gls{admm} consistently incurs more (e.g.\ $38.4$ vs.\ $28.1$ total migrations per epoch at $N = 25$, and $21.2$ vs.\ $18.2$ at $N = 15$). The gap is driven by P-\gls{admm}'s collision repair, which relocates additional instances to lower co-location risk, trading migration churn for tighter security; this is the risk-versus-migration trade-off that distinguishes the two schedules. As in Fig.~\ref{fig:scalability}, the shaded bands denote the $95\%$ confidence interval of the per-epoch mean.

\textit{Solve time scaling}: According to Fig. \ref{fig:scalability_time}, MYO's solve cost grows steeply with $N$ and exits the comparison beyond $N = 10$ (cf.\ the feasibility limit noted above).
P-ADMM and S-ADMM stay feasible on all $40$ epochs at every scale and well within the $120$\,sec epoch budget, with mean solve times growing from
$16$ to $74$\,s for S-\gls{admm} and $14$ to $64$\,s for P-\gls{admm} for $N$ from $5$ to $25$. The security-unaware baselines bound the two extremes: GDY remains near-instantaneous at every $N$, while RU-MILP is the cheapest \gls{milp} but its solve time is reported over a shrinking feasible subset as its feasibility degrades at large $N$ (e.g.\ $4/40$ epochs at $N = 25$).

\section{Conclusion}
\label{sec:conclusion}

This paper addresses security- and migration-aware \gls{sfc} placement in \gls{leo} satellite constellations, where an epoch-by-epoch rotating topology couples confidentiality, service continuity, and computational tractability in ways absent from terrestrial \gls{nfv} placement. 
The primary contribution is a unified placement framework built on two modelling advances: 
i)  a co-location security risk model grounded in established ISO/NIST risk-management principles, equipped with analytic bounds that render cross-slice exposure a quantifiable optimization objective and 
ii) an avoidable-versus-forced migration formulation that penalizes only controllable churn against the unavoidable handover baseline imposed by orbital motion. These are cast as the first multi-slice security-aware single-epoch \gls{milp} for \gls{leo} \gls{sfc} placement, subject
to per-satellite CPU, \gls{isl}, and per-user \gls{e2e} delay constraints. 
To render this formulation deployable at practical slice counts, we employed an \gls{admm}-style penalized per-slice best-response decomposition as the enabling solution technique. It transfers the quadratically growing cross-slice co-location coupling into a linear per-slice penalty. This decomposition replaces the intractable joint program with $N$ independent subproblems coordinated through two complementary schedules: i) a sequential Gauss--Seidel refresh (S-\gls{admm}) and ii) a
parallel Jacobi schedule with deterministic collision repair (P-\gls{admm}).

Evaluation over an $80$-satellite Walker--Delta constellation across a range of slice counts yields three principal insights. 
First, the decisive practical benefit is feasibility, as the framework returns a placement within the per-epoch time budget at every tested scale and sustains full \gls{e2e} delay compliance. In contrast, the monolithic security-aware \gls{milp} becomes intractable beyond a modest slice count. Therefore, the proposed framework operates precisely in the regime where the joint solver cannot. 
Second, embedding the risk model directly in the objective is what secures the placement. Both schedules suppress co-location risk by orders of magnitude relative to security-unaware optimization, with the parallel schedule holding risk at or near zero as the constellation scales.
Third, the two schedules expose a controllable risk-versus-migration trade-off at a near-identical overall objective. The parallel schedule attains lower co-location risk for modestly more migration, while its independent schedule yields a secondary wall-clock speedup. Collectively, these results show that the proposed framework matches or exceeds the security of the joint optimum while scaling to operating regimes that the monolithic formulation cannot reach.

Some limitations delineate the scope of these findings and motivate future work. The decomposition guarantees a monotone, bounded incumbent and a coordinate-wise best-response equilibrium at its fixed point
(Proposition~\ref{prop:convergence}) rather than convergence to the joint optimum. Additionally, the realized parallel speedup is bounded by a thread-based implementation in which only the per-slice solver calls execute concurrently. Building on this foundation, future work includes a predictive multi-epoch extension that pre-positions \glspl{vnf} across orbital handoffs, a process-parallel or distributed implementation that realizes the decomposition's full speedup, and validation across heterogeneous multi-orbit configurations. By making cross-slice co-location risk a quantifiable and tractable objective, this framework provides a tractable basis for secure, continuity-aware service orchestration as on-orbit edge computing scales toward 6G deployment.

\bibliographystyle{IEEEtran}
\bibliography{references}

@techreport{etsi_nfv,
  author      = {{ETSI}},
  title       = {{GS NFV-SEC 024 - V1.1.1}:  Network Functions Virtualisation ({NFV}); Security; Security Management},
  institution = {European Telecommunications Standards Institute},
  year        = {2026}
}

@INPROCEEDINGS{cohen2015near,
  author={Cohen, Rami and Lewin-Eytan, Liane and Naor, Joseph Seffi and Raz, Danny},
  booktitle={IEEE Conference on Computer Communications (INFOCOM)}, 
  title={Near optimal placement of virtual network functions}, 
  year={2015},
  volume={},
  number={},
  pages={1346-1354}
  }

@article{eramo2017approach,
  author  = {Eramo, Vincenzo and Amici, Matteo and Germoni, Angelo},
  title   = {An Approach for the Dimensioning of Computation Resources in an {NFV} Infrastructure Hosting Virtual-Router Functions},
  journal = {IEEE Transactions on Network and Service Management},
  volume  = {14},
  number  = {3},
  pages   = {539--552},
  year    = {2017},
  publisher = {IEEE}
}

@article{rankothge2017optimizing,
  author  = {Rankothge, Windhya and Le, Franck and Russo, Alessandra and Lobo, Jorge},
  title   = {Optimizing Resource Allocation for Virtualized Network Functions in a Cloud Center Using Genetic Algorithms},
  journal = {IEEE Transactions on Network and Service Management},
  volume  = {14},
  number  = {2},
  pages   = {343--356},
  year    = {2017},
  publisher = {IEEE}
}

@article{biallach2024vnf,
  author  = {Biallach, Aymane and Nace, Dritan and Tomaszewski, Artur and Bouhtou, Mustapha and Kumbria, Kingsley},
  title   = {Virtual Network Function Reconfiguration in {5G} Networks: An Optimization Perspective},
  journal = {Networks},
  volume  = {84},
  number  = {1},
  pages   = {3--22},
  month   = feb,
  year    = {2024},
  publisher = {Wiley}
}

@article{wang2020sfc,
  author  = {Wang, Guangchao and Zhou, Sheng and Zhang, Shan and Niu, Zhisheng and Shen, Xuemin},
  title   = {{SFC}-Based Service Provisioning for Reconfigurable Space-Air-Ground Integrated Networks},
  journal = {IEEE Journal on Selected Areas in Communications},
  volume  = {38},
  number  = {7},
  pages   = {1478--1489},
  month   = jul,
  year    = {2020},
  publisher = {IEEE},
  doi     = {10.1109/JSAC.2020.2986851}
}

@article{huang2020service,
  author  = {Huang, Yan and Liu, Jun and Shi, Yi and Taleb, Tarik and Kato, Nei},
  title   = {Service Function Placement and Chaining in Satellite Edge Computing},
  journal = {IEEE Transactions on Network and Service Management},
  volume  = {17},
  number  = {4},
  pages   = {2462--2475},
  year    = {2020},
  publisher = {IEEE}
}

@article{he2024iotj,
  author  = {He, Yejun and Shen, Shuyu and Shi, Wenchao and Guo, Song},
  title   = {Load-Aware Network Resource Orchestration in {LEO} Satellite Network: A {GAT}-Based Approach},
  journal = {IEEE Internet of Things Journal},
  volume  = {11},
  number  = {9},
  pages   = {15969--15984},
  year    = {2024},
  publisher = {IEEE}
}

@article{minardi2025jsac,
  author  = {Minardi, Mario and Vu, Thang X. and Lei, Lei and Chatzinotas, Symeon and Ottersten, Bj{\"o}rn},
  title   = {{SAST-VNE}: A Flexible Framework for Network Slicing in {6G} Integrated Satellite-Terrestrial Networks},
  journal = {IEEE Open Journal of the Communications Society},
  volume  = {5},
  pages   = {5648--5663},
  year    = {2024},
  publisher = {IEEE}
}

@article{boyd2011distributed,
  author  = {Boyd, Stephen and Parikh, Neal and Chu, Eric and Peleato, Borja and Eckstein, Jonathan},
  title   = {Distributed Optimization and Statistical Learning via the Alternating Direction Method of Multipliers},
  journal = {Foundations and Trends in Machine Learning},
  volume  = {3},
  number  = {1},
  pages   = {1--122},
  year    = {2011},
  publisher = {Now Publishers},
  doi     = {10.1561/2200000016}
}

@article{yu2021distributed,
  author  = {Yu, Rui and Ding, Zhiguo and Zhong, Sheng and Wan, Xin and Yang, Shu and Zhang, Yan},
  title   = {{ADMM}-Based Distributed Algorithm for Energy-Efficient {VNF} Placement in Mobile Edge Networks},
  journal = {IEEE Transactions on Green Communications and Networking},
  volume  = {5},
  number  = {3},
  pages   = {1246--1259},
  year    = {2021},
  publisher = {IEEE}
}

@inproceedings{ristenpart2009hey,
  author    = {Ristenpart, Thomas and Tromer, Eran and Shacham, Hovav and Savage, Stefan},
  title     = {Hey, You, Get Off of My Cloud: Exploring Information Leakage in Third-Party Compute Clouds},
  booktitle = {Proceeding ACM Conference Computing  Communication Security (CCS)},
  pages     = {199--212},
  year      = {2009},
  doi       = {10.1145/1653662.1653687}
}

@inproceedings{liu2015last,
  author    = {Liu, Fangfei and Yarom, Yuval and Ge, Qian and Heiser, Gernot and Lee, Ruby B.},
  title     = {Last-Level Cache Side-Channel Attacks are Practical},
  booktitle = { IEEE Symposium on Security and Privacy},
  pages     = {605--622},
  year      = {2015},
  doi       = {10.1109/SP.2015.43}
}

@INPROCEEDINGS{ali2020security,
  author={Ali, Abeer and Anagnostopoulos, Christos and Pezaros, Dimitrios P.},
  booktitle={IEEE Symposium on Computers and Communications (ISCC)}, 
  title={In-Network Placement of Security {VNFs} in Multi-Tenant Data Centers}, 
  year={2020},
  volume={},
  number={},
  pages={1-6},
  doi={10.1109/ISCC50000.2020.9219711}}

@article{zhang2021security,
  author  = {Zhang, Zhuo and Yu, Le and Wu, Wei and Maharjan, Sabita},
  title   = {Security-Aware Virtual Network Function Placement in Edge-Cloud Networks},
  journal = {IEEE Transactions on Information Forensics and Security},
  volume  = {16},
  pages   = {3324--3337},
  year    = {2021},
  publisher = {IEEE}
}

@techreport{iso31000,
  author      = {{ISO}},
  title       = {{ISO 31000:2018} Risk Management -- Guidelines},
  institution = {International Organization for Standardization},
  year        = {2018}
}

@techreport{nist800,
  author      = {{NIST}},
  title       = {{SP 800-53 Rev. 5}: Security and Privacy Controls for Information Systems and Organizations},
  institution = {National Institute of Standards and Technology},
  year        = {2020},
  doi         = {10.6028/NIST.SP.800-53r5}
}

@article{mccormick1976computability,
author = {Mccormick, Garth P.},
title = {Computability of global solutions to factorable nonconvex programs: Part {I} -- Convex underestimating problems},
year = {1976},
issue_date = {December  1976},
publisher = {Springer-Verlag},
address = {Berlin, Heidelberg},
volume = {10},
number = {1},
issn = {0025-5610},
journal = {Math. Program.},
pages = {147–175},
numpages = {29}
}

@ARTICLE{ETSI_MEC,
  author={ETSI},
  title={Mobile Edge Computing ({MEC}); Framework and Reference Architecture},
  journal={ETSI GS MEC 003 V2.1.1},
  year={2019},
  volume={},
  number={},
  pages={1-32},
  doi={10.3403/30322645U}
}

@INPROCEEDINGS{Jia2020,
  author={Jia, Ziye and Sheng, Min and Li, Jiandong and Zhu, Yan and Bai, Weigang and Han, Zhu},
  booktitle={IEEE International Conference on Communications (ICC)}, 
  title={Virtual Network Functions Orchestration in Software Defined {LEO} Small Satellite Networks}, 
  year={2020},
  volume={},
  number={},
  pages={1-6},
  doi={10.1109/ICC40277.2020.9148906}}

@ARTICLE{Jia2021,
  author={Jia, Ziye and Sheng, Min and Li, Jiandong and Zhou, Di and Han, Zhu},
  journal={IEEE Transactions on Wireless Communications}, 
  title={{VNF}-Based Service Provision in Software Defined {LEO} Satellite Networks}, 
  year={2021},
  volume={20},
  number={9},
  pages={6139-6153},
  doi={10.1109/TWC.2021.3072155}}

@INPROCEEDINGS{Yan2025,
  author={Yan, Weixin and Zhang, Zhilong and Liu, Danpu and Luo, Tao},
  booktitle={IEEE Wireless Communications and Networking Conference (WCNC)}, 
  title={Joint Optimization of {VNF} Reusing and Routing in Satellite Networks}, 
  year={2025},
  volume={},
  number={},
  pages={1-6},
  doi={10.1109/WCNC61545.2025.10978727}}

@ARTICLE{XGao2021,
  author={Gao, Xiangqiang and Liu, Rongke and Kaushik, Aryan},
  journal={IEEE Transactions on Network and Service Management}, 
  title={Service Chaining Placement Based on Satellite Mission Planning in Ground Station Networks}, 
  year={2021},
  volume={18},
  number={3},
  pages={3049-3063},
  doi={10.1109/TNSM.2020.3045432}}

@ARTICLE{XGao2022,
  author={Gao, Xiangqiang and Liu, Rongke and Kaushik, Aryan},
  journal={IEEE Transactions on Network and Service Management}, 
  title={Virtual Network Function Placement in Satellite Edge Computing With a Potential Game Approach}, 
  year={2022},
  volume={19},
  number={2},
  pages={1243-1259}}

@ARTICLE{Geng2024,
  author={Yuhui, Geng and Niwei, Wang and Xi, Chen and Xiaofan, Xu and Changsheng, Zhou and Junyi, Yang and Zhenyu, Xiao and Xianbin, Cao},
  journal={China Communications}, 
  title={Service Function Chain Migration in {LEO} Satellite Networks}, 
  year={2024},
  volume={21},
  number={3},
  pages={247-259}}

@ARTICLE{Xia2024,
  author={Xia, Qiufen and Wang, Guijie and Xu, Zichuan and Liang, Weifa and Xu, Zhou},
  journal={IEEE Transactions on Mobile Computing}, 
  title={Efficient Algorithms for Service Chaining in {NFV}-Enabled Satellite Edge Networks}, 
  year={2024},
  volume={23},
  number={5},
  pages={5677-5694}}

@ARTICLE{He2024,
  author={He, Jingchao and Cheng, Nan and Yin, Zhisheng and Zhou, Haibo and Zhou, Conghao and Aldubaikhy, Khalid and Alqasir, Abdullah and Shen, Xuemin},
  journal={IEEE Internet of Things Journal}, 
  title={Load-Aware Network Resource Orchestration in {LEO} Satellite Network: A {GAT}-Based Approach}, 
  year={2024},
  volume={11},
  number={9},
  pages={15969-15984}}

@ARTICLE{Doan2025,
  author={Doan, Khai and Avgeris, Marios and Leivadeas, Aris and Lambadaris, Ioannis and Shin, Wonjae},
  journal={IEEE Transactions on Vehicular Technology}, 
  title={Cooperative Learning-Based Framework for {VNF} Caching and Placement Optimization Over Low {Earth} Orbit Satellite Networks}, 
  year={2025},
  volume={74},
  number={3},
  pages={4758-4773}}

@ARTICLE{Liu2026,
  author={Liu, Yuru and Zhang, Xin and Zhang, Li and Deng, Lei and Gao, Fang and Zhou, Haibo},
  journal={IEEE Transactions on Vehicular Technology  (Early Access)}, 
  title={Robust Virtual Network Embedding for Ultra-Dense {LEO} Satellite Networks: An Adaptive Deep Reinforcement Learning Approach}, 
  year={2026},
  number={},
  pages={1-14}}

@ARTICLE{Ahsan2026,
  author={Ahsan, Muhammad and Vu, Thang X. and Chatzinotas, Symeon},
  journal={IEEE Transactions on Vehicular Technology}, 
  title={Flexible Resource Allocation and Path Reconfiguration Strategies for e{MBB} and m{MTC} Services in a {LEO} Satellite Topology}, 
  year={2026},
  volume={75},
  number={1},
  pages={642-653}}

@ARTICLE{Chen2026,
author={Chen, Xiao and Hu, Zenghao and Wang, Boyu and Zhu, Chao and Zhao, Zhenyu and Li, Xin and Wang, Fangxin},
journal={IEEE Transactions on Mobile Computing},
title={ {Sat-SFC}: Service Function Chain Placement in Battery Supply Varying Satellite Networks },
year={2026},
volume={PrePrints},
number={01},
ISSN={1558-0660},
pages={1-18},
doi={10.1109/TMC.2026.3692605},
publisher={IEEE Computer Society},
address={Los Alamitos, CA, USA}}

@ARTICLE{Liu2026a,
  author={Liu, Tianhao and Li, Xin and Zhang, Yongjun and Zhao, Chenyu and Yan, Xuhao and Huang, Shanguo},
  journal={IEEE Transactions on Green Communications and Networking}, 
  title={Cost-Efficient Service Function Chain Deployment in Satellite Optical Networks}, 
  year={2026},
  volume={10},
  number={},
  pages={2447-2462},
  doi={10.1109/TGCN.2026.3666650}}

@ARTICLE{Mahyoub2026,
  author={Mahyoub, Mohammed and Jaafar, Wael and Muhaidat, Sami and Yanikomeroglu, Halim},
  journal={IEEE Transactions on Network and Service Management}, 
  title={{STARS}: Stability-Aware {SFC} Orchestration and Associations in {LEO} Satellite Networks}, 
  year={2026},
  volume={23},
  number={},
  pages={3326-3340},
  doi={10.1109/TNSM.2026.3674391}}

@ARTICLE{Jia2025,
  author={Jia, Ziye and Cao, Yilu and He, Lijun and Wu, Qihui and Zhu, Qiuming and Niyato, Dusit and Han, Zhu},
  journal={IEEE Transactions on Vehicular Technology}, 
  title={Service Function Chain Dynamic Scheduling in Space-Air-Ground Integrated Networks}, 
  year={2025},
  volume={74},
  number={7},
  pages={11235-11248}}

@ARTICLE{Qin2023,
  author={Qin, Xiaohan and Ma, Ting and Tang, Zhixuan and Zhang, Xin and Zhou, Haibo and Zhao, Lian},
  journal={IEEE Transactions on Wireless Communications}, 
  title={Service-Aware Resource Orchestration in Ultra-Dense {LEO} Satellite-Terrestrial Integrated 6G: A Service Function Chain Approach}, 
  year={2023},
  volume={22},
  number={9},
  pages={6003-6017}}

@ARTICLE{Zheng2025,
  author={Zheng, Gao and Wang, Ning and Qian, Peng and Griffin, David and Tafazolli, Regius Rahim},
  journal={IEEE Journal on Selected Areas in Communications}, 
  title={{SDN}-Based Service Function Chaining in Integrated Terrestrial and {LEO} Satellite-Based Space Internet}, 
  year={2025},
  volume={43},
  number={2},
  pages={537-550}}

@ARTICLE{Liu2026b,
  author={Liu, Yepeng and Zhang, Ran and Liu, Jiang and Sun, Ninghan and Zhang, Xinyuan},
  journal={IEEE Transactions on Mobile Computing}, 
  title={{VMR-STAG} Based Online {SFC} Orchestration in Space-Terrestrial Integrated Networks}, 
  year={2026},
  volume={25},
  number={3},
  pages={3936-3952}}

@ARTICLE{Xu2024,
  author={Xu, Meilin and Jia, Min and Guo, Qing and de Cola, Tomaso},
  journal={IEEE Transactions on Vehicular Technology}, 
  title={Delay-Sensitive and Resource-Efficient {VNF} Deployment in Satellite-Terrestrial Networks}, 
  year={2024},
  volume={73},
  number={10},
  pages={15467-15482}}

@ARTICLE{Petrosino2023,
  author={Petrosino, Antonio and Piro, Giuseppe and Grieco, Luigi Alfredo and Boggia, Gennaro},
  journal={IEEE Transactions on Network and Service Management}, 
  title={On the Optimal Deployment of Virtual Network Functions in Non-Terrestrial Segments}, 
  year={2023},
  volume={20},
  number={4},
  pages={4831-4845}}

@ARTICLE{Wang2026,
  author={Wang, Peng and Sourav, Suman and Chen, Binbin and Li, Hongyan},
  journal={IEEE Transactions on Mobile Computing}, 
  title={An {SFC}-Constrained Max-Flow Solver for Satellite Networks Using Flexible Function-Time Expanded Graph}, 
  year={2026},
  volume={25},
  number={3},
  pages={4103-4120}}

@article{JGao2022,
title = {Resource consumption and security-aware multi-tenant service function chain deployment based on hypergraph matching},
journal = {Computer Networks},
volume = {216},
pages = {109298},
year = {2022},
issn = {1389-1286},
doi = {https://doi.org/10.1016/j.comnet.2022.109298},
author = {Jing Gao and Lei Feng and Peng Yu and Fanqin Zhou and Zihao Wu and Xuesong Qiu and Jingchun Li and Yifei Zhu}
}

@ARTICLE{Wang2024,
  author={Wang, Ben and Li, Jun and Cao, Shaohua and Guler, Evrim and Zheng, Danyang},
  journal={IEEE Access}, 
  title={Security-Aware Service Function Chaining and Embedding With Asymmetric Dedicated Protection}, 
  year={2024},
  volume={12},
  number={},
  pages={53944-53957}}

@ARTICLE{ZhangVNE,
  author={Zhang, Peiying and Wang, Chao and Jiang, Chunxiao and Benslimane, Abderrahim},
  journal={IEEE Transactions on Network Science and Engineering}, 
  title={Security-Aware Virtual Network Embedding Algorithm Based on Reinforcement Learning}, 
  year={2021},
  volume={8},
  number={2},
  pages={1095-1105}}

@ARTICLE{Wang2021,
  author={Wang, Xiangfeng and Yan, Junchi and Jin, Bo and Li, Wenhao},
  journal={IEEE Transactions on Cybernetics}, 
  title={Distributed and Parallel {ADMM} for Structured Nonconvex Optimization Problem}, 
  year={2021},
  volume={51},
  number={9},
  pages={4540-4552}}

@ARTICLE{Asheralieva2024,
  author={Asheralieva, Alia and Niyato, Dusit and Miyanaga, Yoshikazu},
  journal={IEEE Transactions on Mobile Computing}, 
  title={Efficient Dynamic Distributed Resource Slicing in {6G} Multi-Access Edge Computing Networks With Online {ADMM} and Message Passing Graph Neural Networks}, 
  year={2024},
  volume={23},
  number={4},
  pages={2614-2638}}

@article{Dubba2024,
  author    = {Sudha Dubba and Balaprakasa Rao Killi},
  title     = {Security-Aware Cost Optimized Dynamic Service Function Chain Scheduling},
  journal   = {Journal of Network and Systems Management},
  year      = {2024},
  volume     = {33},
  number     = {1},
  pages      = {4},
  doi        = {10.1007/s10922-024-09880-2},
  issn       = {1573-7705}
}

@ARTICLE{mahyoub2026_visibility,
  author={Mahyoub, Mohammed and Yanikomeroglu, Halim and Karabulut Kurt, Gunes and Martel, Stephane},
  journal={IEEE Transactions on Network and Service Management}, 
  title={Visibility-Aware User Association and Resource Allocation in Multi-Slice LEO Satellite Networks}, 
  year={2026},
  volume={23},
  number={},
  pages={1596-1614}}

@article{mahyoub_slicing_2025,
author = {Mahyoub, Mohammed and AbdulGhaffar, AbdulAziz and Alalade, Emmanuel and Matrawy, Ashraf},
title = {A Security-Aware Network Function Sharing Model for {5G} Slicing},
journal = {SECURITY AND PRIVACY},
volume = {8},
number = {3},
pages = {e70039},
year = {2025}
}

@INPROCEEDINGS{AbdulGhaffar2024,
  author={AbdulGhaffar, AbdulAziz and Mahyoub, Mohammed and Matrawy, Ashraf},
  booktitle={ International Conference on the Design of Reliable Communication Networks (DRCN)}, 
  title={On the Impact of Flooding Attacks on {5G} Slicing with Different {VNF} Sharing Configurations}, 
  year={2024},
  volume={},
  number={},
  pages={136-142}}

@article{Takapoui2020,
author = {Reza Takapoui and Nicholas Moehle and Stephen Boyd and Alberto Bemporad},
title = {A simple effective heuristic for embedded mixed-integer quadratic programming},
journal = {International Journal of Control},
volume = {93},
number = {1},
pages = {2--12},
year = {2020},
publisher = {Taylor \& Francis}
}

@ARTICLE{Abdelsadek2023,
  author={Abdelsadek, Mohammed Y. and Chaudhry, Aizaz U. and Darwish, Tasneem and Erdogan, Eylem and Karabulut-Kurt, Gunes and Madoery, Pablo G. and Ben Yahia, Olfa and Yanikomeroglu, Halim},
  journal={IEEE Transactions on Communications}, 
  title={Future Space Networks: Toward the Next Giant Leap for Humankind}, 
  year={2023},
  volume={71},
  number={2},
  pages={949-1007}}

@ARTICLE{Pachler2024,
  author={Pachler, Nils and Crawley, Edward F. and Cameron, Bruce G.},
  journal={IEEE Wireless Communications Letters}, 
  title={Flooding the Market: Comparing the Performance of Nine Broadband Megaconstellations}, 
  year={2024},
  volume={13},
  number={9},
  pages={2397-2401}}

@ARTICLE{Qiufen2024,
  author={Xia, Qiufen and Wang, Guijie and Xu, Zichuan and Liang, Weifa and Xu, Zhou},
  journal={IEEE Transactions on Mobile Computing}, 
  title={Efficient Algorithms for Service Chaining in {NFV}-Enabled Satellite Edge Networks}, 
  year={2024},
  volume={23},
  number={5},
  pages={5677-5694}}

@ARTICLE{Laniewski2025,
  author={Laniewski, Dominic and Lanfer, Eric and Aschenbruck, Nils},
  journal={IEEE Open Journal of the Communications Society}, 
  title={Measuring Mobile Starlink Performance: A Comprehensive Look}, 
  year={2025},
  volume={6},
  number={},
  pages={1266-1283}}

@ARTICLE{Pachler2026,
  author={Pachler, Nils and Crawley, Edward F. and Cameron, Bruce G.},
  journal={IEEE Transactions on Vehicular Technology}, 
  title={Mixed Integer Linear Programming for Ground Infrastructure and Inter-Satellite Link Design in Satellite Constellations}, 
  year={2026},
  volume={75},
  number={1},
  pages={796-810}}

@ARTICLE{Wang2022,
  author={Wang, Peng and others},
  journal={IEEE Transactions Wireless Communication}, 
  title={Enhancing {E}arth Observation Throughput Using Inter-Satellite Communication}, 
  year={2022},
  volume={21},
  number={10},
  pages={7990-8006}}

@ARTICLE{Wang2025,
  author={Wang, Ran and Cai, Lundan and Wu, Qiang and Niyato, Dusit},
  journal={IEEE Transactions on Mobile Computing}, 
  title={Service Function Chain Deployment With Intrinsic Dynamic Defense Capability}, 
  year={2025},
  volume={24},
  number={6},
  pages={5418-5432}}

@techreport{3gpp38811,
  title        = {Study on new radio ({NR}) to support non-terrestrial network},
  author  = {3rd Generation Partnership Project ({3GPP})},
  address      = {Sophia Antipolis, France},
  type         = {Technical Report},
  number       = {TR 38.811},
  version      = {15.4.0},
  year         = {2020}
}
 
\end{document}